\newtheorem{thm}{Theorem}[section]
\newtheorem{lem}[thm]{Lemma}
\newtheorem{prop}[thm]{Proposition}
\theoremstyle{definition}
\theoremstyle{remark}
\newtheorem{rem}[thm]{Remark}
\numberwithin{equation}{section}
\def\BibTeX{{\rm B\kern-.05em{\sc i\kern-.025em b}\kern-.08em
    T\kern-.1667em\lower.7ex\hbox{E}\kern-.125emX}}
\begin{document}
\renewcommand{\algorithmicrequire}{\textbf{Input:}} 
\renewcommand{\algorithmicensure}{\textbf{Output:}}
\title{Wireless Regional Imaging through Reconfigurable Intelligent Surfaces: Passive Mode\\
\thanks{National Foundation (NSFC), NO.12141107 supports this work.}
}

\author{\IEEEauthorblockN{Fuhai Wang\textsuperscript{1,2}, Chun Wang\textsuperscript{1}, Rujing Xiong\textsuperscript{1}, Zhengyu Wang\textsuperscript{1}, Tiebin Mi\textsuperscript{1}, Robert Caiming Qiu\textsuperscript{1}}
\IEEEauthorblockA{\textsuperscript{1} School of Electronic Information and Communications,\\ Huazhong University of Science and Technology, Wuhan 430074, China. \\
\textsuperscript{2} School of Artificial Intelligence and Automation,\\ Huazhong University of Science and Technology, Wuhan 430074, China. \\
Email:\{wangfuhai, wangchun, rujing, wangzhengyu, mitiebin, caiming\}@hust.edu.cn}
}

\maketitle
\begin{abstract}
In this paper, we propose a multi-RIS-aided wireless imaging framework in 3D facing the distributed placement of multi-sensor networks. The system creates a randomized reflection pattern by adjusting the RIS phase shift, enabling the receiver to capture signals within the designated space of interest (SoI). Firstly, a multi-RIS-aided linear imaging channel modeling is proposed. We introduce a theoretical framework of computational imaging to recover the signal strength distribution of the SOI. For the RIS-aided imaging system, the impact of multiple parameters on the performance of the imaging system is analyzed. The simulation results verify the correctness of the proposal. Furthermore, we propose an amplitude-only imaging algorithm for the RIS-aided imaging system to mitigate the problem of phase unpredictability. Finally, the performance verification of the imaging algorithm is carried out by proof of concept experiments under reasonable parameter settings. 
\end{abstract}

\begin{IEEEkeywords}
RIS, passive wireless imaging, distributed deployment, computational imaging, amplitude-only, array signal processing
\end{IEEEkeywords}

\section{Introduction} 
\IEEEPARstart{T}he technologies for integrated sensing and communication, energy efficiency, and intelligently aided communications have been identified as the main technology scenarios for 6G \cite{ITUR2022} in the International Telecommunication Union (ITU) white paper towards 2030 and beyond. On the one hand, these scenarios introduce new demands for perception in 6G. On the other hand, the subsequent key technologies of 6G also offer new methods for high-precision regional imaging. 


In communication sensing networks, array antennas are applied to estimate angles, by receiving signal phase difference. The application of existing large-scale array antenna technology\cite{wen2019survey} can provide a rich dimension of signal characteristics for single-station positioning. To achieve 3D spatial imaging, widely adopted geometric-based positioning methods leverage techniques like Angle of Arrival (AoA). This involves estimating terminal equipment position by measuring signal intersection points with a known direction. If candidate regions, determined by geometric data, converge at a singular point, precise regional imaging can be accomplished. 

As one of the potentially important technologies in the 6G standard\cite{ITUR2022}, reconfigurable intelligent surface (RIS) is an electromagnetic surface that can be regarded as a special type of relay "antenna" array, which dynamically adjusts parameters such as phase, amplitude, and frequency to influence and control the electromagnetic response of environmental objects. Due to its advantages of low power consumption, low cost, and high flexibility, it is more suitable to be deployed in large connectivity scenarios such as smart cities, smart factories to serve sensors, and so on. In communication\cite{basar2019wireless}, RIS is used to improve beam-forming gain, reduce interference, and minimize fading. In sensing\cite{song2023intelligent,chu2021intelligent,liu2022tdoa,meng2022intelligent}, RIS can act as a passive anchor point to provide geometric diversity, as well as to further improve the localization performance by controlling channel. RIS-aided imaging is more in line with the future trend of large-scale sensor data communication and the emergence of the Internet of Everything (IoE) system. Most of the researches\cite {rinchi2022single,zhang2021metalocalization,he2022high} focus on the design of phase-shift algorithms by using the property of reconfigurable phase-shift for regional localization imaging. 
\paragraph*{Contributions}
  First, we introduce the multi-block RIS mathematically concise linear channel modeling. Our focus lies on array signal processing to model the distributed multiple-input single-output (MISO) signals, as well as analyze the influence of each key parameter of the RISs on imaging performance. The characteristics and properties of the system imaging matrix are discussed. Specifically, we theoretically analyze the constraints on the number of samples, the total number of RIS units, and the position of multi-RIS on the property of the imaging matrix. The simulation results verify the correctness of the proposal. The second contribution of our study involves solving the problem that the phase data cannot be acquired to reconstruct the regional image. We propose an amplitude-only imaging algorithm for the multi-RIS-aided imaging system. To the best of our knowledge, we are the first to conduct conceptual experiments in real situations, and the results show the validity and robustness of the proposed algorithm.
\paragraph*{Notations}
Bold symbols in capital letters and small letters denote matrices and vectors, respectively. $\parallel \cdot \parallel $ denotes the Euclidean norm. The conjugate transpose, transpose and conjugate of~$\mathbf{A}$ are denoted by~$\mathbf{A}^H$ ,~$\mathbf{A}^T$ and ~$\mathbf{A}^{*}$, respectively. $\otimes$ is the Kronecker product. $\left| x \right|$ denotes the absolute value. $\nabla f(\mathbf{\cdot})$ means derivative of the function $f(\mathbf{\cdot})$. $\left\langle \mathbf{\cdot}, \mathbf{\cdot} \right\rangle$ means inner product of the vector.

\begin{figure}[htbp]
\centerline{\includegraphics[width=0.82\columnwidth]{./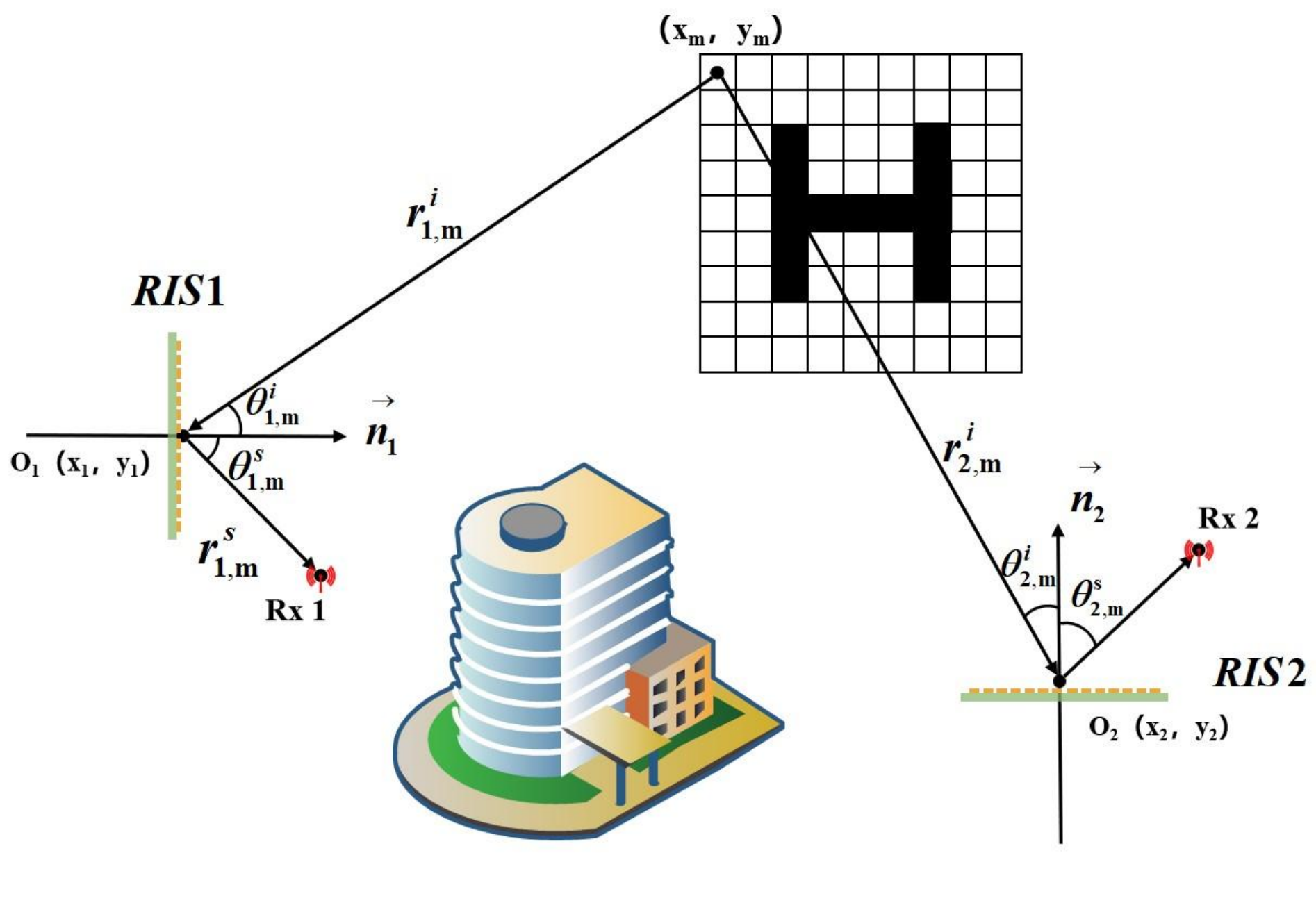}}
\caption{The RIS-aided passive imaging scene schematic.}
\label{1_fig}
\end{figure}

\section{System model and problem formulation}  
\subsection{Scenario Description}
The system model of the proposed RIS-aided imaging system is shown in Fig.~\ref{1_fig}. Considering imaging the 3D SoI in the far field, we have $\hat{d} \ge 2D^2/\lambda $ where $D$ is the maximum size of RIS\cite{cui2022near}. Each RIS is an N-elements uniform rectangular array (URA), and each element interval is $\lambda/2$, where $\lambda$ denotes the wavelength. Each RIS operates at a discrete phase shift of one bit serving a single receiver. There is an NLoS path between the SOI and the receiver, and a sensing link is provided through the RIS as a reflection. The 3D SoI is divided into M subspaces. Also, the complex-valued signal $\mathbf{s}\in \mathbb{C}^M $ from the SoI can be represented by a vector. The receiver and RIS operate passively at $f = 5.8 $ GHz.

\subsection{Channel Model of a Single RIS}
The M grid points of the SoI are considered incident signal sources. Considering that the RIS is a UPA, the channel for the single RIS-aided perception is modeled as follows\cite{mi2023towards}.
\begin{equation}
    \begin{aligned}
     \mathbf{h} = h^s  \mathbf{a}^s \mathbf{\Omega} \mathbf{A}^i \mathbf{H}^i . 
    \end{aligned}
\end{equation}
\noindent
Where $\mathbf{H}^i  = \text{diag}\left[\frac{ e^{-j 2 \pi r^i_1 / \lambda } }{ r^i_1 }, \cdots, \frac{ e^{-j 2 \pi r^i_M / \lambda } }{ r^i_M } \right]$ denotes the propagation channel matrix of M far-field sources to the RIS with phase and amplitude varying with distance. The distance from the RIS to the $m$-th source is denoted as $r^i_m$. The propagation channel between the RIS and the receiver is denoted as $ h^s = \frac{ e^{-j 2 \pi r^s / \lambda } }{ r^s }$. The distance from the RIS to the receiver is denoted as $r^s$. $\mathbf{A}^i$ denotes the incident steering vectors of the RIS unit array, which is expressed as follows:
$$
\mathbf{A}^i =
\begin{bmatrix}
  e^{ j 2 \pi \mathbf{p}_1^\top \mathbf{u_1}  / \lambda }  & \cdots  & e^{ j 2 \pi \mathbf{p}_1^\top \mathbf{u_M}  / \lambda } \\
  \vdots  & \ddots & \vdots \\
  e^{ j 2 \pi \mathbf{p}_N^\top \mathbf{u_1}  / \lambda }  & \cdots  & e^{ j 2 \pi \mathbf{p}_N^\top \mathbf{u_M}  / \lambda }\\
\end{bmatrix}.
$$
Then $\mathbf{a}^s = [ e^{ j 2 \pi \mathbf{p}_1^{\top} \mathbf{u}  / \lambda } \cdots e^{ j 2 \pi \mathbf{p}_N^{\top} \mathbf{u}  / \lambda ]^T }$ denotes the steering vector of reflections of RIS unit array. $\mathbf{p}_n^{\top} = [x_n, y_n, z_n]^T, n=1, \cdots,N$ is the spatial coordinates of each unit of RIS and $\mathbf{u}_m = [\sin \theta_m \cos \phi_m, \sin \theta_m \sin \phi_m, \cos \theta_m],m=1,\cdots,M$. In 3D space, the imaging system calculates the incident angle degrees $(\theta_m,\phi_m)$ from the coordinates of the $m$-th region position to the RIS. The RIS configuration is represented by a diagonal matrix $\mathbf{\Omega} = \text{diag} \bigl( \gamma_1 e^{j \Omega_1}, \cdots, \gamma_N e^{j \Omega_N} \bigr)$, where $\gamma_n$ and $\Omega_n$ represent the reflection amplitude and the phase shift coefficient of the $n$-th unit of RIS, respectively.  

\subsection{Received Signal}
There are M signals $\bigr(s_1e^{-j\alpha_{1}},s_2e^{-j\alpha_{2}},\cdots,s_Me^{-j\alpha_{M}} \bigr)$ from different grids $\bigr((\theta_1,\phi_1,r_1), \cdots,(\theta_M,\phi_M,r_M) \bigr)$ impinging on the RIS where $s_m$ denotes signal amplitude and $\alpha_{m}$ denotes signal phase. The received signal by the space-fed RISs can be formulated as
\begin{equation}\label{y_s}
    y^s = \mathbf{h} \mathbf{s}^i + n.
\end{equation}
We assume that the SoI is divided into M grids and $\mathbf{s}^i = [s_1e^{-j\alpha_{1}},s_2e^{-j\alpha_{2}},\cdots,s_Me^{-j\alpha_{M}}]^T$ represents the incident signals of electric filed. $n \in \mathbb{C}$ represents the Additive White Gaussian Noise (AWGN) that is distributed as $\mathcal{CN}(0,\rho)$.

Assume that the signal intensity distribution of the SoI during the short sensing time is approximately constant. By generating T code matrices {$\mathbf{\Omega}$} to sense the incident signals, the received signal matrix of T samples can be formulated as
\begin{equation}\label{y_kT}
    \begin{aligned}
         {\mathbf{y}}_{k,\mathbf{T}} 
          = {\mathbf{H}}_{k,\mathbf{T}} \mathbf{s}^i + \mathbf{n} 
          = \begin{bmatrix}
                \mathbf{h}_{k,1}\\
                \vdots \\
                \mathbf{h}_{k,T}\\
             \end{bmatrix} \mathbf{s}^i + \mathbf{n}. 
    \end{aligned}
\end{equation}
The interaction effect between the multiple RISs can be neglected. Hence, for a single receiver serving a single RIS distributed system, the signal received by the receivers of K systems with T sampling is given by
\begin{equation}\label{Y_{K,T}}
    \begin{aligned}
         \mathbf{y}^s_{\mathbf{K},\mathbf{T}} = \mathbf{H_{K,T}} \mathbf{s}^i 
        = \begin{bmatrix}
                {\mathbf{H}}_{1,\mathbf{T}}\\
                \vdots \\
                {\mathbf{H}}_{k,\mathbf{T}}\\
             \end{bmatrix} \mathbf{s}^i + \mathbf{n}.
    \end{aligned}
\end{equation}

With the parameters of the RIS-aided imaging system known, the imaging matrix $\mathbf{H_{K, T}}\in {\mathbb{C}^{K\times T \times M}}$ is known. The RIS-aided regional imaging can be transformed into the least squares (LS) solution problem.
\section{Theoretical analysis.}\label{SS:Properties of imaging matrix}
From the theory of computational imaging, as shown in Eq.(\ref{Y_{K,T}}), the rank of imaging matrix $\mathbf{H_{K,T}}$ influences the performance of a linear imaging system. Therefore, we would like to find the key parameters that constrain the RIS-aided imaging performance. 
\begin{thm}\label{T:1}
 The rank of imaging matrix $\mathbf{H_{K,T}}$ is not only affected by the number of unknown grid points M and the total number of samples K$\times$T, but it is also bounded by the total number of units K$\times$N.
 $$\text{rank}(\mathbf{H_{K,T}}) \le \min\{M,K\times T,K\times N\}$$
\end{thm}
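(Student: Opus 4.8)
The plan is to combine the row-block structure of $\mathbf{H_{K,T}}$ with the factored form of each single-RIS channel, invoking only two elementary facts: a $p\times q$ matrix has rank at most $\min\{p,q\}$; and stacking matrices vertically cannot push the rank beyond the sum of the blocks' ranks, while $\text{rank}(\mathbf{ABC})\le\min\{\text{rank}\,\mathbf{A},\,\text{rank}\,\mathbf{B},\,\text{rank}\,\mathbf{C}\}$.

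First I would record the two bounds that follow directly from the matrix shape. Stacking the $K$ blocks $\mathbf{H}_{k,\mathbf{T}}\in\mathbb{C}^{T\times M}$ produces $\mathbf{H_{K,T}}\in\mathbb{C}^{KT\times M}$, so at once $\text{rank}(\mathbf{H_{K,T}})\le KT$ and $\text{rank}(\mathbf{H_{K,T}})\le M$; this already accounts for the $M$ and $K\times T$ terms in the statement.

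The substantive step is the $K\times N$ bound, which comes from locating where the snapshot-to-snapshot diversity actually resides. For RIS $k$ and snapshot $t$ the channel row is $\mathbf{h}_{k,t}=h^s_k\,\mathbf{a}^s_k\,\mathbf{\Omega}_{k,t}\,\mathbf{A}^i_k\,\mathbf{H}^i_k$, and of these factors only the $N\times N$ diagonal configuration $\mathbf{\Omega}_{k,t}$ changes with $t$ — the scalar $h^s_k$, the reflection steering vector $\mathbf{a}^s_k$, the $N\times M$ incidence matrix $\mathbf{A}^i_k$, and the diagonal propagation matrix $\mathbf{H}^i_k$ are all fixed by geometry. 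Writing $\mathbf{h}_{k,t}=(h^s_k\,\mathbf{a}^s_k\,\mathbf{\Omega}_{k,t})\,\mathbf{A}^i_k\mathbf{H}^i_k$ and stacking over $t=1,\dots,T$, I obtain $\mathbf{H}_{k,\mathbf{T}}=\mathbf{B}_k\,\mathbf{A}^i_k\mathbf{H}^i_k$, where $\mathbf{B}_k\in\mathbb{C}^{T\times N}$ has $t$-th row $h^s_k\,\mathbf{a}^s_k\,\mathbf{\Omega}_{k,t}$. Since $\mathbf{B}_k$ has only $N$ columns, $\text{rank}(\mathbf{H}_{k,\mathbf{T}})\le\text{rank}(\mathbf{B}_k)\le N$ (indeed $\le\min\{T,N,M\}$, using the shapes of $\mathbf{A}^i_k$ and $\mathbf{H}^i_k$ as well). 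Summing over the $K$ row-blocks, $\text{rank}(\mathbf{H_{K,T}})\le\sum_{k=1}^{K}\text{rank}(\mathbf{H}_{k,\mathbf{T}})\le KN$. Taking the minimum of the three inequalities gives $\text{rank}(\mathbf{H_{K,T}})\le\min\{M,\,K\times T,\,K\times N\}$.

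The argument is in essence a dimension count, so no single step is technically difficult; the point that needs care — and the conceptual heart of the statement — is recognizing and justifying the factorization $\mathbf{H}_{k,\mathbf{T}}=\mathbf{B}_k\mathbf{A}^i_k\mathbf{H}^i_k$, i.e. that every factor other than $\mathbf{\Omega}_{k,t}$ is common to all $T$ snapshots. This is exactly the $N$-unit (one-bit) reconfiguration bottleneck of each RIS, and it is what promotes $K\times N$, rather than the naive $K\times T$, to a binding constraint once the number of snapshots grows large.
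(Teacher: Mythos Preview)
Your proof is correct and follows essentially the same route as the paper: both factor each RIS block as a $T\times N$ configuration matrix times an $N\times M$ geometry matrix (your $\mathbf{B}_k$ is the paper's $\mathbf{W}_k$ up to absorbing the scalar $h^s_k$ and the entries of $\mathbf{a}^s_k$), then invoke sub-additivity of rank over the $K$ stacked blocks. Your separation of the trivial $M$ and $K\times T$ bounds from the factorization step is in fact a bit cleaner than the paper's presentation, which folds everything into $K\min\{T,N,M\}$.
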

\begin{proof}
    See Appendix~\ref{S:Appendix_A}.
\end{proof}
\begin{rem}
  It is necessary to set the number of controllable RIS elements K$\times$N and the total number of samples K$\times$T to be larger than M as much as possible so that the rank of imaging matrix $\mathbf{H_{K,T}}$ is not constrained by K$\times$N and K$\times$T to prevent rank deficit. We can view K$\times$T and K$\times$N as degrees of freedom sampled from both temporal and spatial dimensions. 
\end{rem}
Specifically, to investigate the effect of each parameter in the RIS deployment on the imaging matrix, the imaging matrix of one RIS can be referred to the representation in proof\ref{proof1} as follows:
$$\mathbf{H}_{k,\mathbf{T}} = \mathbf{W}_k \bf{P}_{\bf{k}}$$
\noindent
Each element $\omega_{k,n}^t$ in $\mathbf{W}_k$ denotes the reflection coefficient set at the $t$-th time for $n$-th element of $k$-th RIS. $\omega_{k,n}^t = \gamma_{k,n}^t e^{j \Omega_{k,n}^t} $ where $\gamma$ and $\Omega$ represent the reflection amplitude and the phase shift coefficient, respectively. 

Since the RIS is constructed using 1-bit, assuming that each unit has a reflection amplitude of 1 and a phase shift of $0$ or $\pi$, the matrix $\mathbf{W}_k$ can be regarded as an independently identically distributed matrix with each element obeying 1 or -1. The matrix $\mathbf{W}_k$ is a full rank matrix which is described in the following lemma \cite{ferber2021resilience}.

\begin{lem}\label{L:1}
Given an $n \times m$ matrix W with entries in $\{\pm 1\}$, if $n \leq m$, then W is of full rank with high probability(i.e. with probability $1-o(1)$). More precisely, if $m \ge n + n^{1- \epsilon /6}$, then even after changing the sign of $(1 - \epsilon )m/2$ entries, W is still of full rank with high probability.
\end{lem}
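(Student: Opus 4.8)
\emph{Proof strategy.} I would follow the standard route for full‑rank statements and show that no nonzero $w\in\mathbb{R}^m$ is orthogonal to all $n$ rows $r_1,\dots,r_n\in\{\pm1\}^m$ of $W$. The plan is: condition on $r_1,\dots,r_{n-1}$; on the event (of probability $1-o(1)$ by induction on the $(n-1)\times m$ submatrix) that these are independent, their span $V$ has dimension $n-1<m$, so $V^{\perp}\ne\{0\}$; fix any nonzero $w\in V^{\perp}$, which depends only on $r_1,\dots,r_{n-1}$ and so is independent of $r_n$; then $r_n\in V$ forces $\langle r_n,w\rangle=0$, which by the Erd\H{o}s--Littlewood--Offord inequality has probability $O(|\mathrm{supp}(w)|^{-1/2})$. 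Thus the content is an \emph{incompressibility} lemma: with probability $1-o(1)$ over $r_1,\dots,r_{n-1}$, every nonzero $w\in V^{\perp}$ has support of size a constant fraction of $m$. I would prove this by the usual compressible/incompressible dichotomy (a $k$‑sparse unit vector lies in one of $\binom{m}{k}$ coordinate subspaces, and a fixed net of such a subspace fails to be nearly orthogonal to all $n-1$ random rows except with probability $2^{-c(n-1)}$; union‑bound while $k$ stays a small constant fraction of $m$). For $n$ comfortably below $m$ this already yields full rank w.p.\ $1-o(1)$; in the near‑square regime here ($m-n=n^{1-\epsilon/6}=o(n)$) the same skeleton is used but needs the sharp anticoncentration input of the Koml\'os / Kahn--Koml\'os--Szemer\'edi line, which I would quote.

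\emph{For the resilience claim}, each $r_i$ is replaced by an adversarially sign‑flipped $\tilde r_i$, with at most $(1-\epsilon)m/2$ sign changes total. Flipping signs within a row is harmless for anticoncentration, since for any fixed pattern $\eta$ the entries $\eta_jW_{nj}$ are again i.i.d.\ $\pm1$; the new difficulty is that the adversary may also perturb $r_1,\dots,r_{n-1}$, so $V^{\perp}$ is no longer determined by the unperturbed matrix and the event to control is ``\emph{some} admissible flip of $W$ makes the rows dependent.'' I would follow \cite{ferber2021resilience} and \emph{dualize} instead of enumerating flips: $\tilde W$ is rank‑deficient iff some nonzero $w$ has each original row $r_i$ within flip‑distance $\delta_i(w)$ of the hyperplane $w^{\perp}$ with $\sum_i\delta_i(w)\le(1-\epsilon)m/2$. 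For fixed $w$ one bounds $\Pr_W[\sum_i\delta_i(w)\le(1-\epsilon)m/2]$ (a lower‑tail large‑deviation event, since $\mathbb{E}_W[\delta_i(w)]$ is bounded below by a constant while the per‑row budget is a constant $<1/2$), and then nets over $w$, handling incompressible $w$ via Littlewood--Offord and compressible $w$ by a direct count — noting for instance that a support‑one $w$ is useless (no $\pm1$ vector has a zero coordinate) and a support‑two $w$ costs the adversary about $n/2$ flips, already exceeding the budget once $m\ge n+n^{1-\epsilon/6}$.

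\textbf{Main obstacle.} The crux is this net/counting over candidate normals $w$. A naive union bound over the (super‑polynomially many) reachable flip patterns is far too lossy; and even after dualizing one cannot afford a generic polynomially fine net of the sphere, whose entropy would be of order $m\log m$ and overwhelm the large‑deviation saving of order $n$. One must instead exploit that the relevant $w$ are severely constrained — orthogonal to $n-1$ random $\pm1$ rows — to shrink the effective net; this is the technical heart of \cite{ferber2021resilience}. It is in making that saving beat the residual entropy that the thresholds are forced: $(1-\epsilon)m/2<m/2$ denies the adversary the cheap obstructions (balancing or equalizing rows), and the surplus dimension $n^{1-\epsilon/6}$, with the $\epsilon/6$ loss in the exponent, is precisely the slack left to absorb the net granularity and the flip entropy. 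I expect the bookkeeping of this optimization, rather than any single conceptual ingredient, to be where the difficulty lies.
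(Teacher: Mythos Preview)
Your proposal is a reasonable outline of how one would actually establish this result following \cite{ferber2021resilience}, but note that the paper does \emph{not} prove this lemma at all: it is quoted verbatim from the cited reference and used as a black box. There is no ``paper's own proof'' to compare against; the authors simply invoke the result to conclude that the random $\pm1$ phase-configuration matrix $\mathbf{W}_k$ is full rank with high probability, and move on. So the appropriate response here is a one-line citation, not a proof sketch.

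That said, your sketch is broadly faithful to the structure of the Ferber--Luh--McKinley--Jain style arguments (row-exposure, Littlewood--Offord anticoncentration against an incompressible normal, compressible/incompressible dichotomy for the net, and dualization for the resilience part). One minor point: you identify the main obstacle as beating the entropy of a net over candidate normals $w$, which is correct in spirit, but you should be aware that the specific constants in the statement --- the $(1-\epsilon)m/2$ flip budget and the $n^{1-\epsilon/6}$ surplus --- are outputs of a particular optimization in \cite{ferber2021resilience} rather than something you would rederive independently; your sketch treats them as given rather than explaining where they come from, so as a standalone proof it would still need to import those quantitative bounds from the reference. For the purposes of the present paper that is moot, since the lemma is not meant to be reproved here.
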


\begin{rem}
The above matrix in compressed sensing can be called the Rademacher sensing matrix. The matrix $\mathbf{W}_k$ belongs to the Rademacher sensing matrices. $\text{rank}\left( \mathbf{W}_k \right) = \text{min}\{T,N\} $ and the matrix $\mathbf{W}_k$ is full rank.
\end{rem}
The $ \bf{P}_{\bf{k}} $ matrix reflects the placement of RIS about the transmitter and receiver. It reflects how the deployment of RIS affects the performance of the imaging system. The matrix $ \bf{P}_{\bf{k}}$ is composed of four parts which, except for the $\mathbf{A}^i$ matrix, are diagonal matrices or constant. The diagonal elements are non-zero, so all can be considered full-rank or non-singular matrices. Only the properties of the $\mathbf{A}^i$ need to be considered.
\begin{prop}\label{T:2}
If two or more grid points in the SoI to the imaging RIS are on the same observation line, $\mathbf{A}^i$
will be deficient matrix, leading to the imaging matrix $\mathbf{H}_{k,\mathbf{T}}$ deficient. The performance of the imaging system is degraded.
\end{prop}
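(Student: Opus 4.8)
The plan is to trace one rank deficiency from its geometric origin in $\mathbf{A}^i$, through the factorization $\mathbf{H}_{k,\mathbf{T}} = \mathbf{W}_k \mathbf{P}_k$, up to the imaging matrix, and finally to the least-squares reconstruction. First I would make ``on the same observation line'' precise: two grid points $m$ and $m'$ lie on a common observation line of the $k$-th RIS iff they share the same angular coordinates, i.e. $(\theta_m,\phi_m)=(\theta_{m'},\phi_{m'})$ and hence the same unit direction $\mathbf{u}_m=\mathbf{u}_{m'}$, differing only in range $r_m\neq r_{m'}$. Since $\mathbf{A}^i$ is a far-field steering matrix whose $(n,m)$ entry $e^{j 2\pi \mathbf{p}_n^\top \mathbf{u}_m/\lambda}$ depends on the source \emph{only} through $\mathbf{u}_m$, the $m$-th and $m'$-th columns of $\mathbf{A}^i$ coincide, so they are linearly dependent and $\text{rank}(\mathbf{A}^i)\le M-1$. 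More generally, if $q$ grid points share a line the column rank drops to at most $M-q+1$, and in fact $\text{rank}(\mathbf{A}^i)$ equals the number of distinct observation directions; this establishes the deficiency of $\mathbf{A}^i$.

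Next I would propagate this to $\mathbf{H}_{k,\mathbf{T}}$. Using the factorization $\mathbf{H}_{k,\mathbf{T}} = \mathbf{W}_k \mathbf{P}_k$ introduced above, and the stated fact that $\mathbf{P}_k$ is a product of $\mathbf{A}^i$ with a nonzero scalar $h^s_k$ and diagonal matrices built from $\mathbf{a}^s$ and $\mathbf{H}^i$ that all have nonzero entries (hence nonsingular), left/right multiplication by nonsingular factors preserves rank, so $\text{rank}(\mathbf{P}_k)=\text{rank}(\mathbf{A}^i)<M$; concretely, scaling the columns of $\mathbf{A}^i$ by the diagonal of $\mathbf{H}^i$ keeps columns $m$ and $m'$ scalar multiples of one another. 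Then $\text{rank}(\mathbf{H}_{k,\mathbf{T}})=\text{rank}(\mathbf{W}_k \mathbf{P}_k)\le\min\{\text{rank}(\mathbf{W}_k),\text{rank}(\mathbf{P}_k)\}\le\text{rank}(\mathbf{A}^i)<M$, which is exactly the claimed deficiency and tightens the bound of Theorem~\ref{T:1} strictly below $M$.

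To close the statement about performance, I would note that $\text{rank}(\mathbf{H}_{k,\mathbf{T}})<M$ means the null space of $\mathbf{H}_{k,\mathbf{T}}$ is nontrivial --- it contains the difference of the (appropriately scaled) coefficient vectors associated with the collinear grids --- so the LS problem $\mathbf{y}_{k,\mathbf{T}} = \mathbf{H}_{k,\mathbf{T}} \mathbf{s}^i$ admits infinitely many minimizers and the energies of the collinear grids cannot be separated; equivalently, the effective observation loses at least one degree of freedom and the conditioning of the reconstruction deteriorates, so the imaging performance is degraded.

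The rank bookkeeping is routine; the only places that need care are (i) stating the far-field/observation-line equivalence carefully, so that range genuinely drops out of $\mathbf{A}^i$ and re-enters only as the common per-column factor inside $\mathbf{H}^i$ (which is what makes the two columns \emph{parallel} rather than merely ``close''), and (ii) being explicit that collinearity is a \emph{sufficient}, not necessary, cause of deficiency --- grating-lobe aliasing from the $\lambda/2$ spacing is a separate mechanism --- and that for the stacked multi-RIS matrix $\mathbf{H}_{\mathbf{K},\mathbf{T}}$ one would additionally need the collinearity to hold for enough of the RISs simultaneously. I expect step (i), pinning down the geometry precisely, to be the main conceptual obstacle; everything after it is linear algebra.
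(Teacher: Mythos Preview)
Your proposal is correct and follows essentially the same logical path as the paper: same observation line $\Rightarrow$ identical direction vector $\mathbf{u}_m=\mathbf{u}_{m'}$ $\Rightarrow$ duplicate columns in $\mathbf{A}^i$ $\Rightarrow$ rank deficiency propagated through the nonsingular diagonal factors in $\mathbf{P}_k$ to $\mathbf{H}_{k,\mathbf{T}}=\mathbf{W}_k\mathbf{P}_k$. The paper itself gives only a one-sentence informal justification (``the rank of this steering vector is defective, which leads to a defective imaging matrix''), so your version is a strictly more detailed formalization of the same argument rather than a different route; your added remarks on the null-space interpretation for LS and the caveat that collinearity is sufficient but not necessary are sound refinements that go beyond what the paper states.
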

\noindent
When the grid points are at a similar angle of incidence concerning the RIS, the rank of this steering vector is defective, which leads to a defective imaging matrix. Distributed deployment of RISs can reduce the overlap of regions in the observation space.

In summary, to achieve high imaging performance, the RIS imaging system needs to consider that K$\times$T and K$\times$N are as large as possible to the unknowns in the SoI, M. The position of RIS determines the nature of matrix $\bf{P}_{\bf{k}}$, as well as the fact that deploying multiple RISs can ameliorate the shortcomings of a single RIS and a single-view deployment. In this paper, we are more concerned with the issue of system deployment. 

\section{Imaging system design and algorithm.}
\subsection{Fundamental performance analysis with the LS algorithm.}
We construct a set of linear equations using the statistical MISO behavior of the multiple RIS imaging system. Computational imaging can be accomplished using the LS algorithm to analyze the performance of imaging systems. The simulation modeling results are presented in the section~\ref{V}.

\subsection{Amplitude-only imaging algorithm}
In real systems, it is difficult to obtain the signal phase. Further, to address the constrained computational imaging in practical scenarios, we propose a specific phase retrieval algorithm.

The received signal matrix (\ref{Y_{K,T}}) is transformed into the following form with constraints:
\begin{equation}\label{abs(Y_{K,T})}
        \left|\mathbf{y}^s_{\mathbf{K},\mathbf{T}} \right|
        = \left| \begin{bmatrix}
                {\mathbf{H}}_{1,\mathbf{T}}\\
                \vdots \\
                {\mathbf{H}}_{k,\mathbf{T}}\\
             \end{bmatrix} \mathbf{s}^i + \mathbf{n}  \right|. \\
\end{equation}

We recover the signal strength distribution of the SoI using the received signal $\left| \mathbf{y}^s_{\mathbf{K},\mathbf{T}} \right|$, and the problem can be transformed into a phase retrieval problem. 
\begin{equation}\label{phase retrieval problem}
    \begin{aligned}
          \quad \text {find} & \quad  \mathbf{x}, \mathbf{x} \in \mathbb{C}^n \\
         \text { s.t. } & \left| \mathbf{Hx} \right| = \mathbf{b},  \mathbf{H} \in \mathbb{C}^{m \times n}, \mathbf{b} \in  \mathbb{R}^m  \\ 
    \end{aligned}
\end{equation}
Then the amplitude constraints are transformed into an intensity-based empirical loss function optimization problem as follows.
\begin{equation}\label{P1}
    \begin{aligned}
        (\mathrm{P} 1) \quad \min \quad  f(\mathbf{z}) := \frac{1}{2m}\sum_{i=1}^{m}(y_i - \left|  \mathbf{a}_i^H \mathbf{z} \right| )^2
    \end{aligned}
\end{equation}
where $\mathbf{z}$ is equal to $\mathbf{x}$ in equation \ref{phase retrieval problem}.
We try to solve the minimization problem of (P1) using iterative optimization, which has two parts.
\begin{algorithm}[H]
\caption{Amplitude-only Imaging Algorithm} \label{imaging algorithm1}
\begin{algorithmic}
\STATE 
\STATE {\textbf{Input:}}$\left\{\left\{y_{i}\right\}_{1 \leq i \leq m},\left\{\mathbf{h}_{i}\right\}_{1 \leq i \leq m},\left\{\eta_{i}\right\}_{1 \leq i \leq m}, T\right\}$
\STATE \hspace{0.5cm}$ \left\{\mathbf{h}_{i}\right\}_{i=1}^{m} $ : Sensing vectors  
\STATE \hspace{0.5cm}$ y_{i}=\left|\left\langle\mathbf{h}_{i}, \mathbf{x}\right\rangle\right|^{2} $ : measurements 
\STATE \hspace{0.5cm}$  \eta_{i} $ : the parameter 
\STATE \hspace{0.5cm}$ T $ : the maximum iteration times  
\STATE {\textbf{Output:}} $ \mathbf{x}^{*} $ 
\STATE \hspace{0.5cm}$ \mathbf{x}^{*}$  : the reconstructed signal  
\STATE {\textbf{Initialization}}
\STATE \hspace{0.5cm} set $ \rho^{2}=n \frac{\sum_{r} y_{i}}{\sum_{i}\left\|\mathbf{h}_{i}\right\|^{2}}  $
\STATE \hspace{0.5cm} set $ \mathbf{z}_{0},\left\|\mathbf{z}_{0}\right\|=\rho$  to be the eigenvector corresponding to the largest eigenvalue of 
\STATE \hspace{0.5cm} $$\mathbf{Y}=\frac{1}{m} \sum_{i=1}^{m} y_{i} \mathbf{h}_{i} \mathbf{h}_{i}^{*}  $$
\STATE \hspace{0.5cm} \textbf{for}  $k=1 $;$ k \leq  T $; $k++$  \textbf{do} 
\STATE \hspace{1.0cm} $\mathbf{z}_{k+1} \in \operatorname{argmin} f^{k}(\mathbf{z}) $
\STATE \hspace{0.5cm} \textbf{end for} 
\STATE \hspace{0.5cm} $ \mathbf{x}^{*}=\mathbf{z}^{T}  $
\end{algorithmic}
\end{algorithm} 

\begin{figure}[htbp]
\centerline{\includegraphics[width=0.55\columnwidth]{./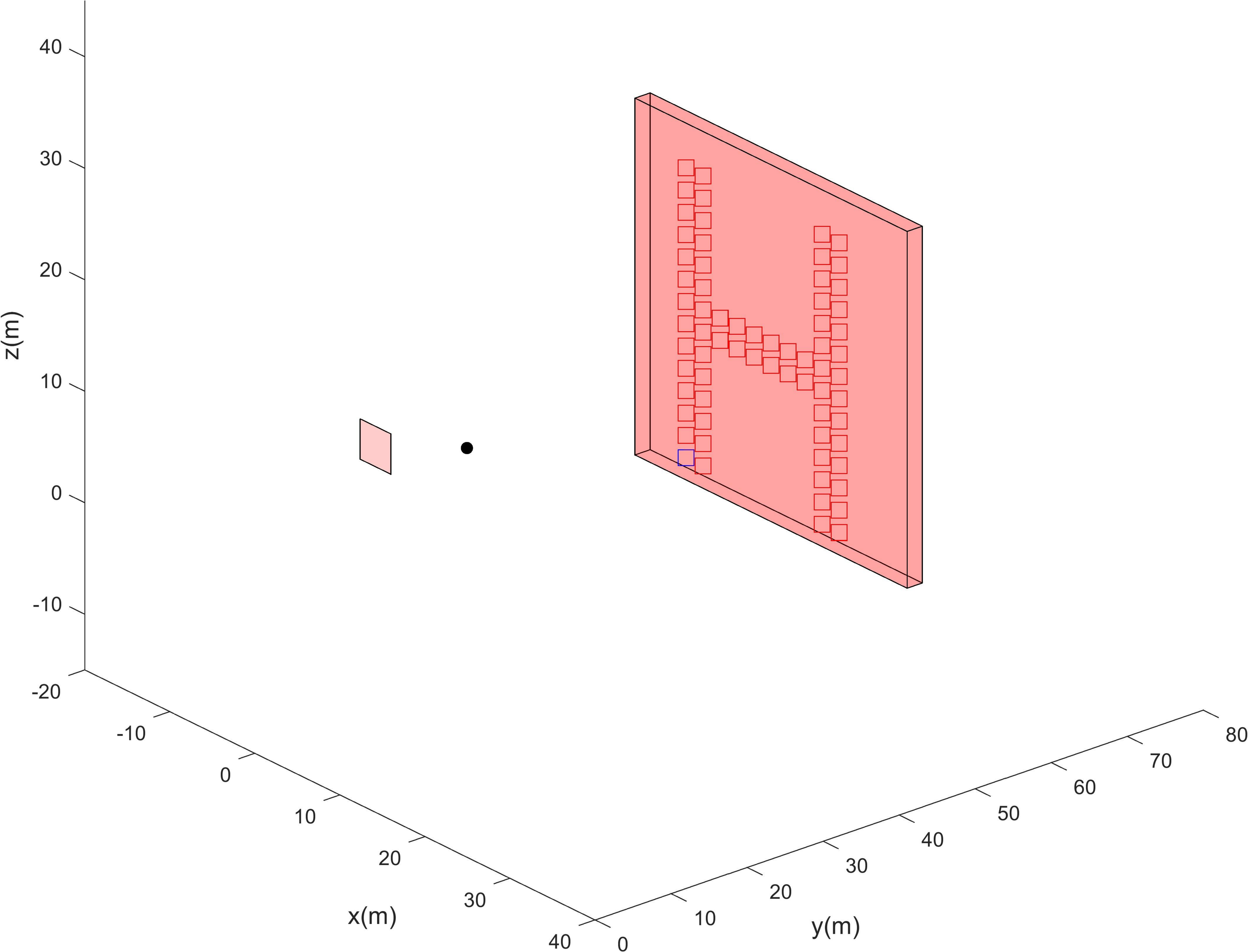}}
\caption{Schematic of the RIS-aided imaging in a 3D scene.}
\label{T_scene}
\end{figure}

Further, utilizing the iteratively reweighted LS algorithm, we can transform this solution into a reweighted wirtinger flow model to improve the computational performance.
\begin{equation}\label{P2}
    \begin{aligned}
    (\mathrm{P} 2) \underset{\mathbf{z} \in \mathbb{C}^{n}}{\min} f(\mathbf{z})=\frac{1}{2 m} \sum_{i=1}^{m} f_{i}(\mathbf{z})=\frac{1}{2 m} \sum_{i=1}^{m} \omega_{i}\left(\left|\left\langle\mathbf{h}_{i}, \mathbf{z}\right\rangle\right|^{2}-y_{i}\right)^{2},
    \end{aligned}   
\end{equation}
\noindent
where $\omega_i \ge 0 $ are weights. If $\{ \omega_i \}_{1 \le i \le m} $ are determined, \ref{P2} can be solved by gradient descent method.

Where $\omega_{i}^{k}=\frac{1}{\left.||\left\langle\mathbf{h}_{i}, \mathbf{z}_{k-1}\right\rangle\right|^{2}-y_{i} \mid+\eta_{i}}, i=1, \ldots, m.$ Those weights are adaptively calculated from the algorithm depending on the $\mathbf{z}_{k-1}$, where $\mathbf{z}_{k-1}$ is the result in the $(k-1)$-th iteration. $\eta_i$ is a parameter that can change during the iteration or to be stagnated all the time. The details can be seen in Algorithm1.

From Algorithm 1\ref{imaging algorithm1}, we can see that we will solve an optimization problem in each iteration $k$:
\begin{equation}
    \mathbf{z}_{k+1} \in \operatorname{argmin} f^{k}(\mathbf{z})=\frac{1}{m} \sum_{i=1}^{m} f_{i}^{k}(\mathbf{z})
\end{equation}
where $f_{i}^{k}(\mathbf{z})=\omega_{i}^{k}\left(\left|\left\langle\mathbf{h}_{i}, \mathbf{z}\right\rangle\right|^{2}-y_{i}\right)^{2} $.

For simplicity, we use a gradient descent algorithm to deal with it in this paper. The application of the amplitude-only imaging algorithm can be seen in Section~\ref{VI}.
\section{Performance analysis} \label{V}
We conduct simulations to analyze the impact of each key parameter. We discuss the impact of the total number of samples, the number of RIS units, and the position of multi-RIS on the imaging accuracy. Root-mean-square error (RMSE) and structural similarity index (SSIM) are considered metrics to evaluate the quality of the reconstruction image. 
\begin{figure}
\centering
\subfloat[N=15$\times$15.]{
\includegraphics[width=.32\columnwidth]{./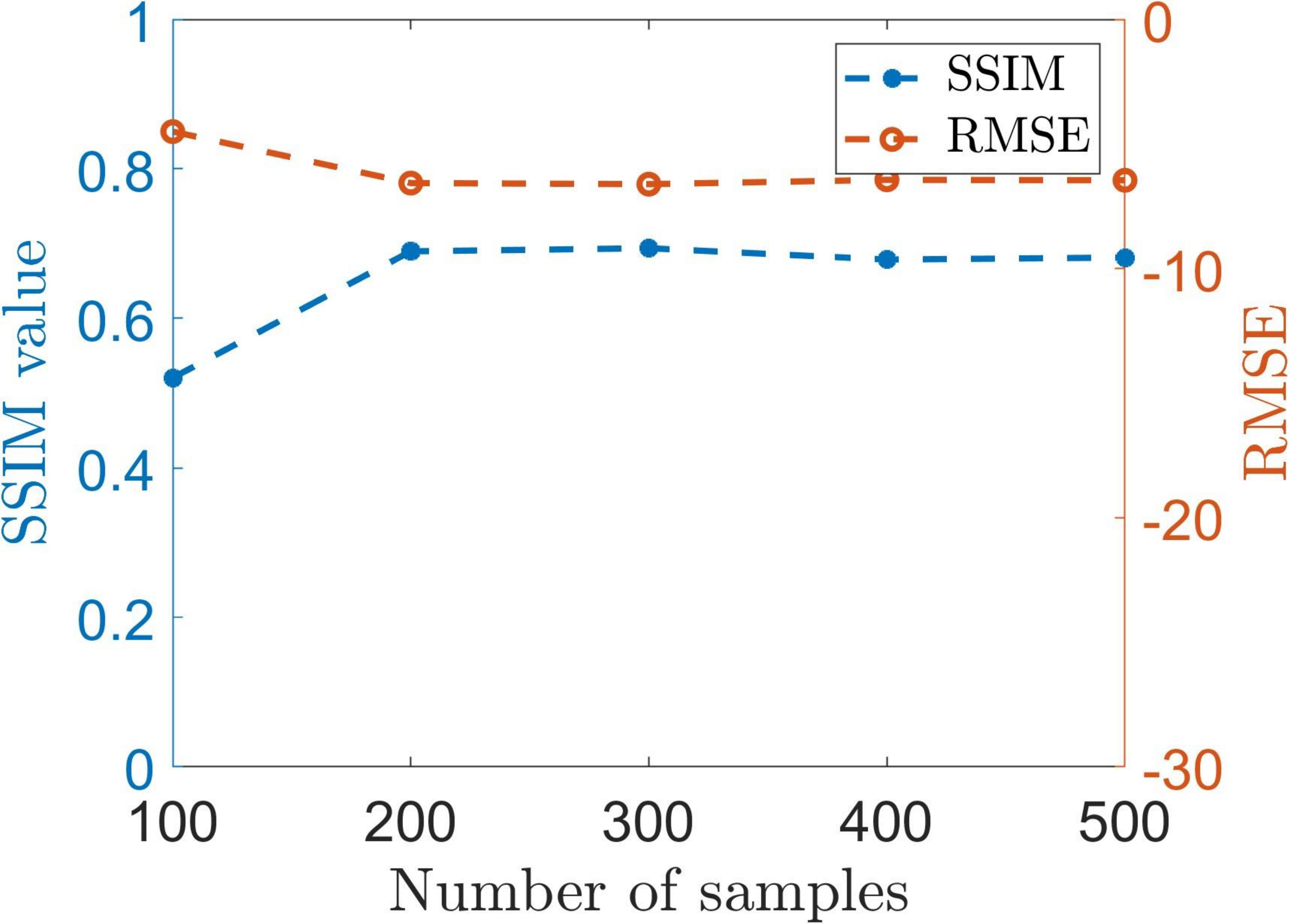}}
\subfloat[N=40$\times$40.]{
\includegraphics[width=.32\columnwidth]{./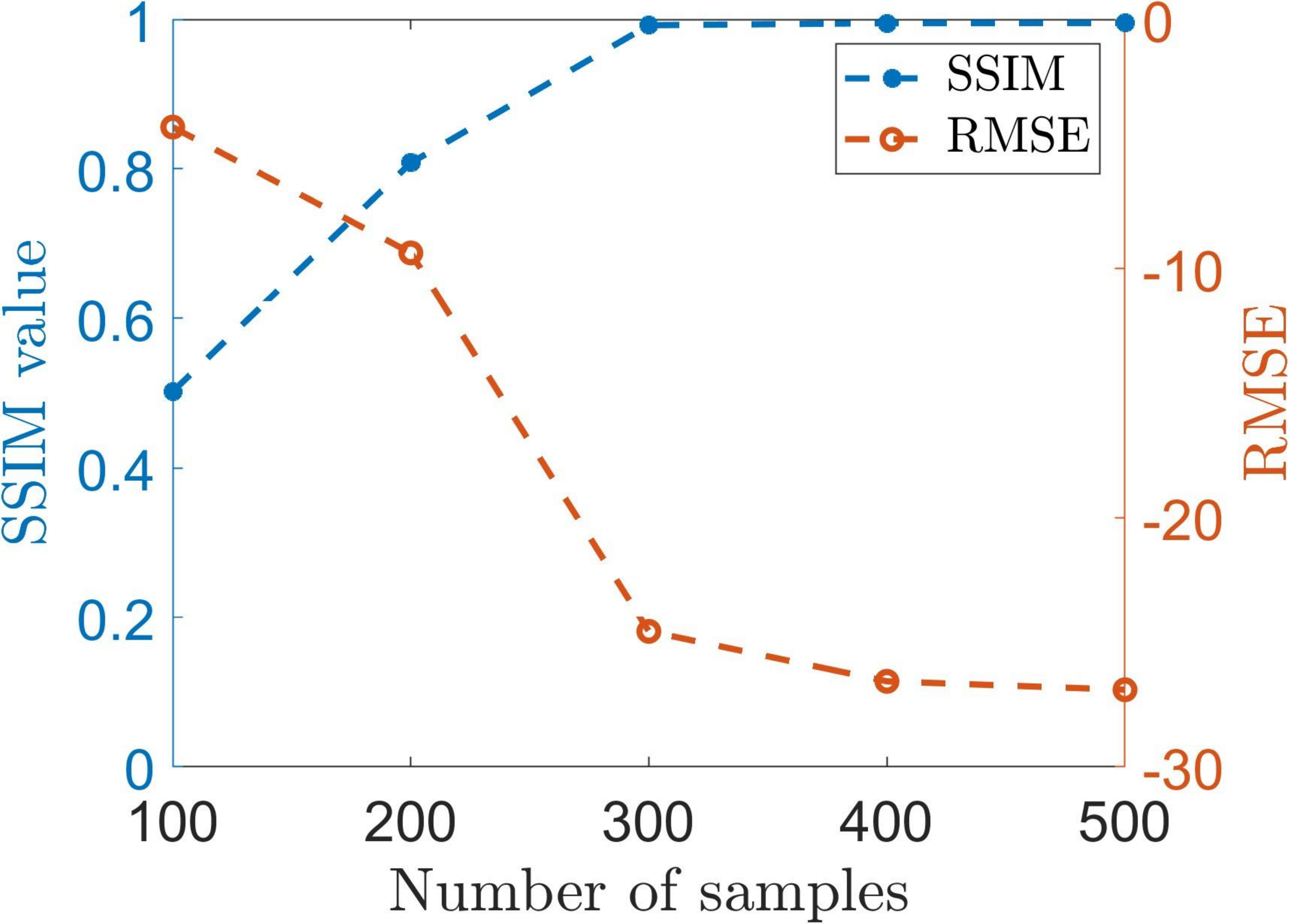}}
\subfloat[The rank of imaging matrixes.]{
\includegraphics[width=.3\columnwidth]{./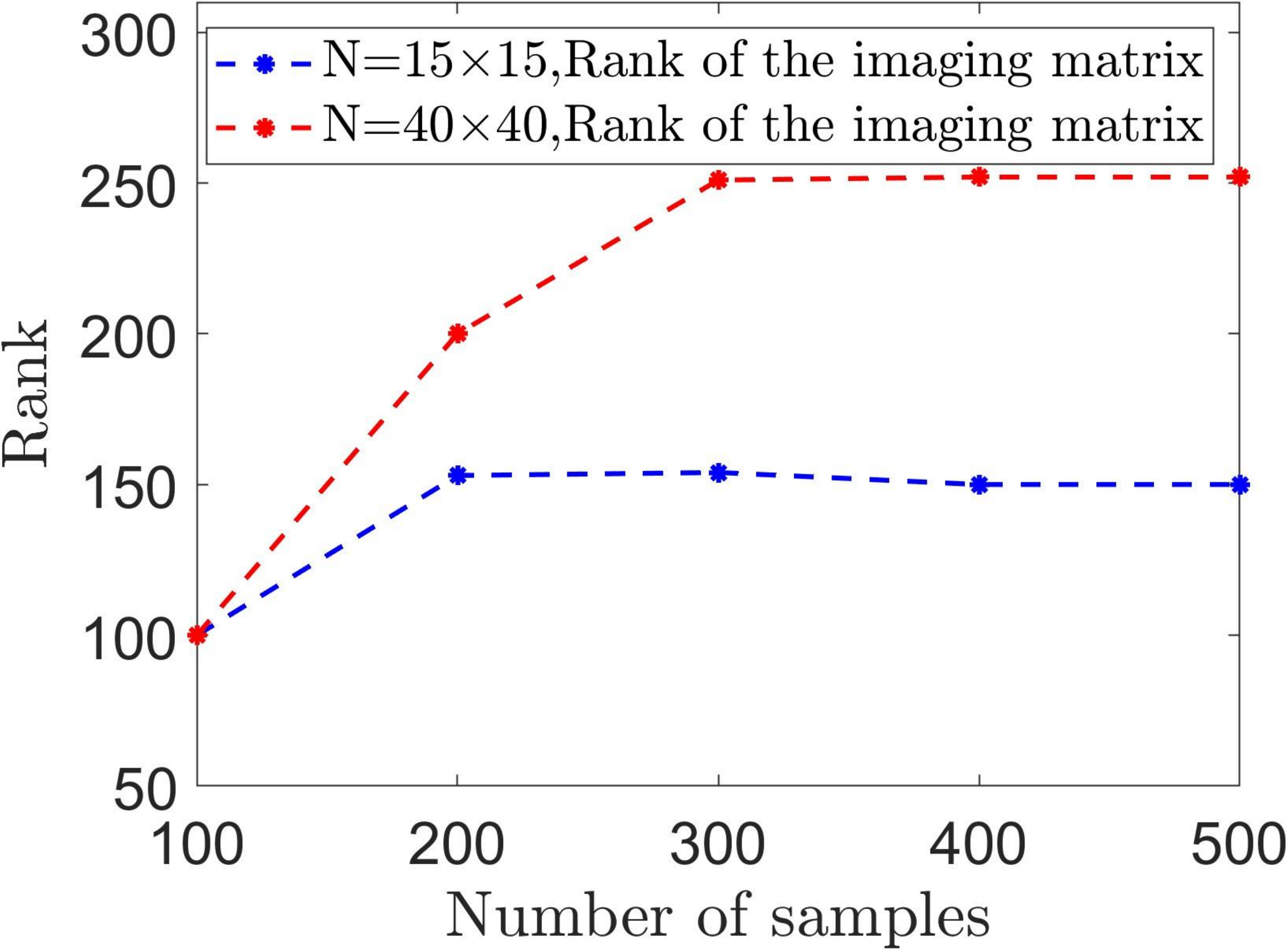}}
\caption{Performances with different numbers of samples.}
\label{T_SSIM}
\end{figure}

\begin{figure}[htbp]
\centerline{\includegraphics[width=0.8\columnwidth]{./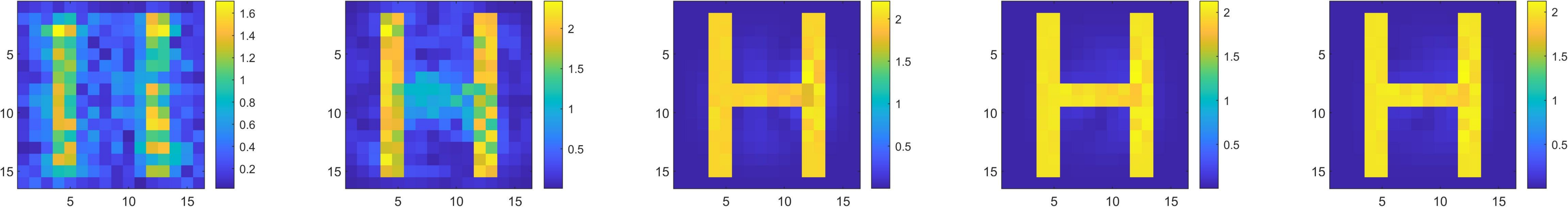}}
\caption{Results for different numbers of samples, N=40$\times$40.}
\label{T_N2}
\end{figure}

\subsection{Impact of the number of samples}\label{SS:Impact of the number of samples}
The number of sampled snapshots as a key parameter affects the accuracy of perception~\cite{9721164,zha2020doa}. Here, we consider two types of cases about the relationship between the number of RIS units N and the number of samples T. 

The scene is shown in Fig.~\ref{T_scene}, where a 32m $\times$ 32m $\times$ 2m SoI is discretized into M=16$\times$16$\times$1 at 2m intervals. We set the image of the letter "H" as the source, and the RIS is placed 50m in front of the SoI. The receiver is placed 10m in front of the RIS. We specify N as 15$\times$15 and 40$\times$40, respectively, and T is increased from 100 to 500 per interval of 100. The imaging results of N=40$\times$40 are shown in Fig.~\ref{T_N2}. The specific imaging results are shown in Fig.~\ref{T_SSIM}. When N is fixed, increasing T can improve the imaging quality. The SSIM gradually approaches 1 and RMSE gradually decreases in Fig.~\ref{T_SSIM}. When N=15$\times$15, is smaller, the imaging quality is not improving no matter how much T is increased. The result that the rank of the imaging matrix grows with T is shown in Fig~\ref{T_SSIM}(c). However, when T continues to be raised, the rank of the imaging matrix with N=15$\times$15 no longer improves.

As mentioned in Theorem\ref{T:1}, N bounds the rank of the imaging matrix. If N is smaller than M, the imaging matrix will be a rank deficit matrix, leading to a degradation of the imaging performance which we named spatial undersampling. When T is smaller than M, we can refer to this case as time undersampling. We further analyze the impact of the number of RIS units and RIS deployment on the imaging system.
\begin{figure}[htbp]
\centerline{\includegraphics[width=.6\columnwidth]{./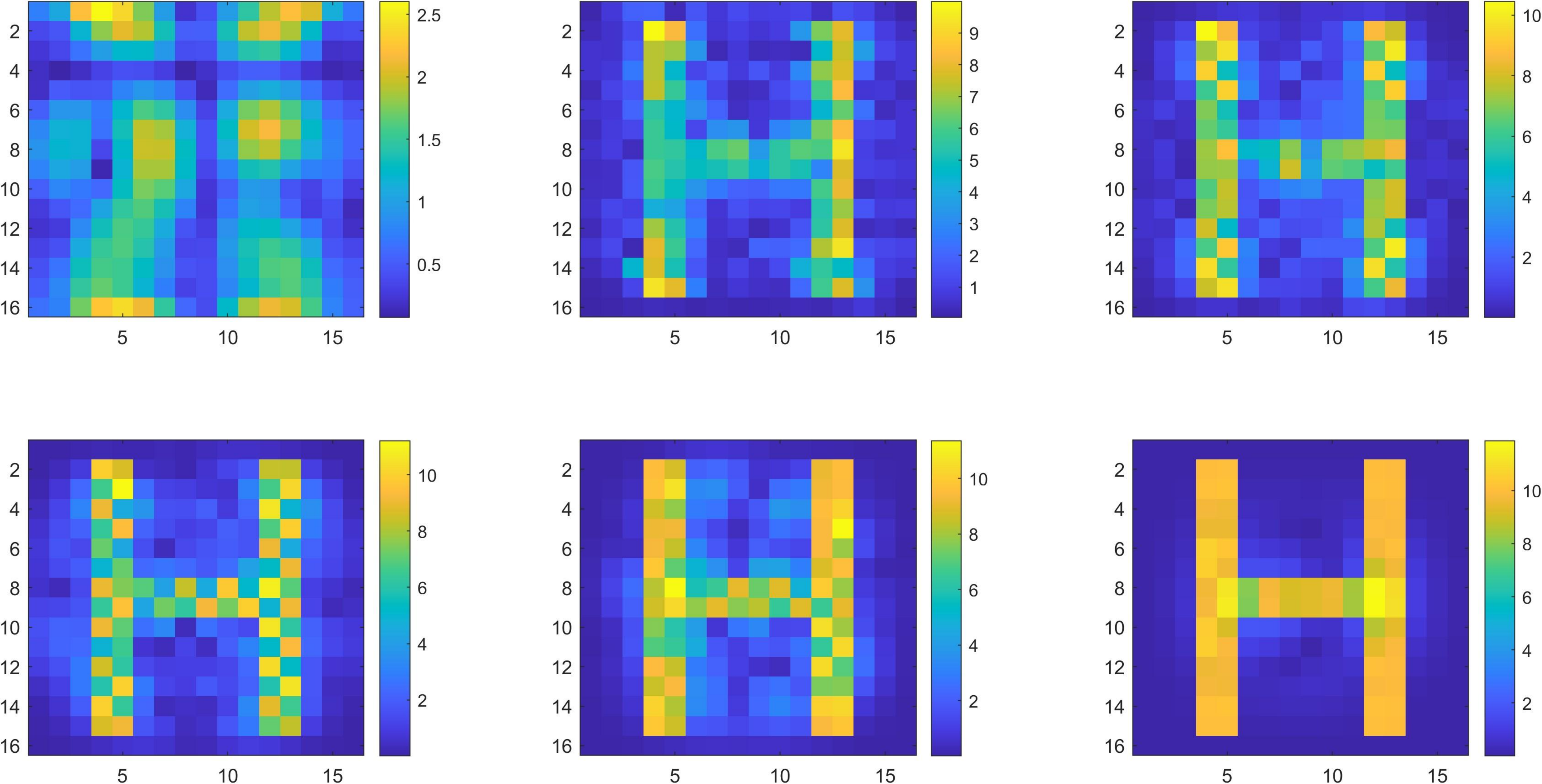}}
\caption{Imaging results for different numbers of RIS units}
\label{ls_matrix}
\end{figure}
\begin{figure}
\centering
\subfloat[Distribution of the singular values of imaging matrix.]{
\includegraphics[width=.3\columnwidth]{./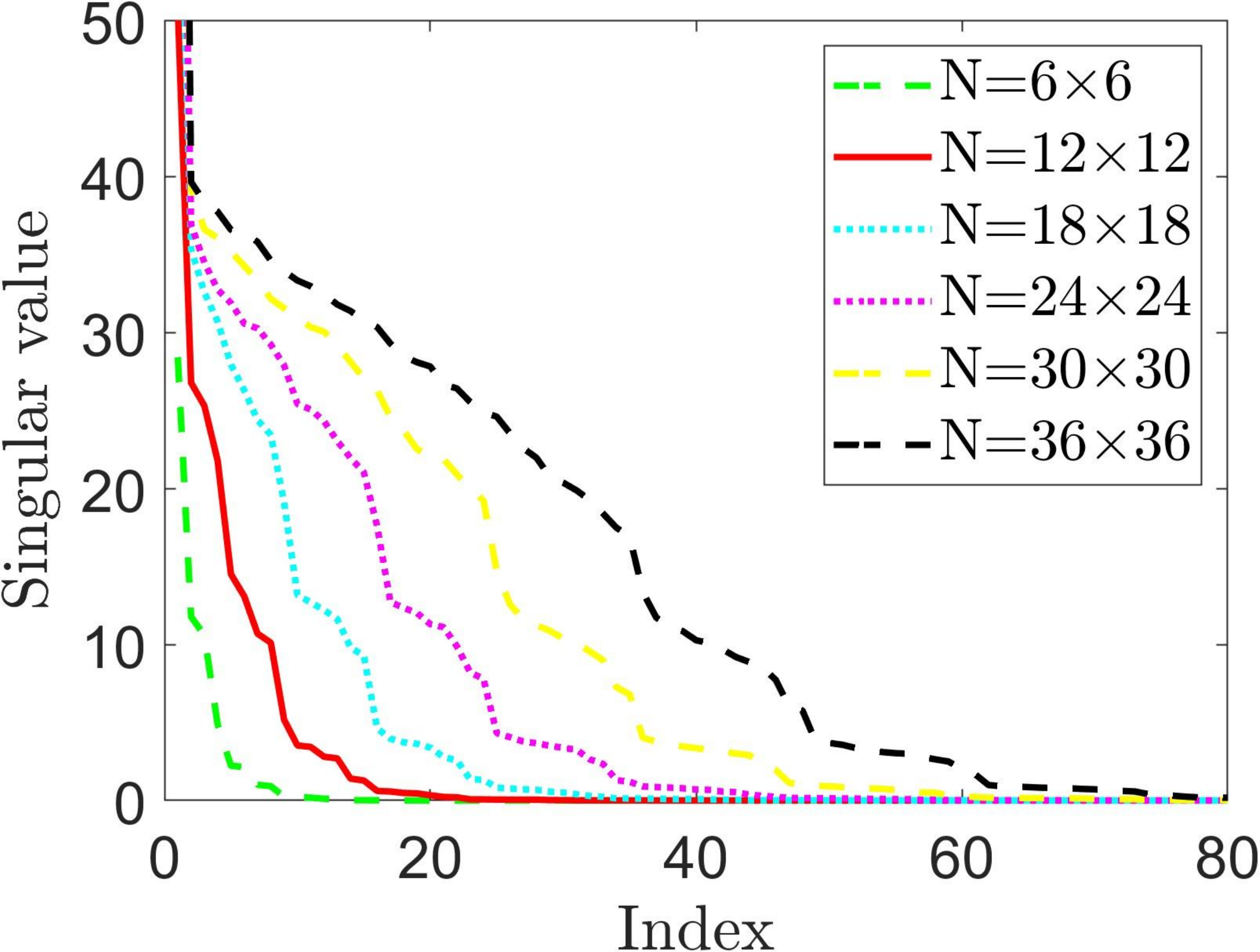}}
\subfloat[The rank of imaging matrix.]{
\includegraphics[width=.3\columnwidth]{./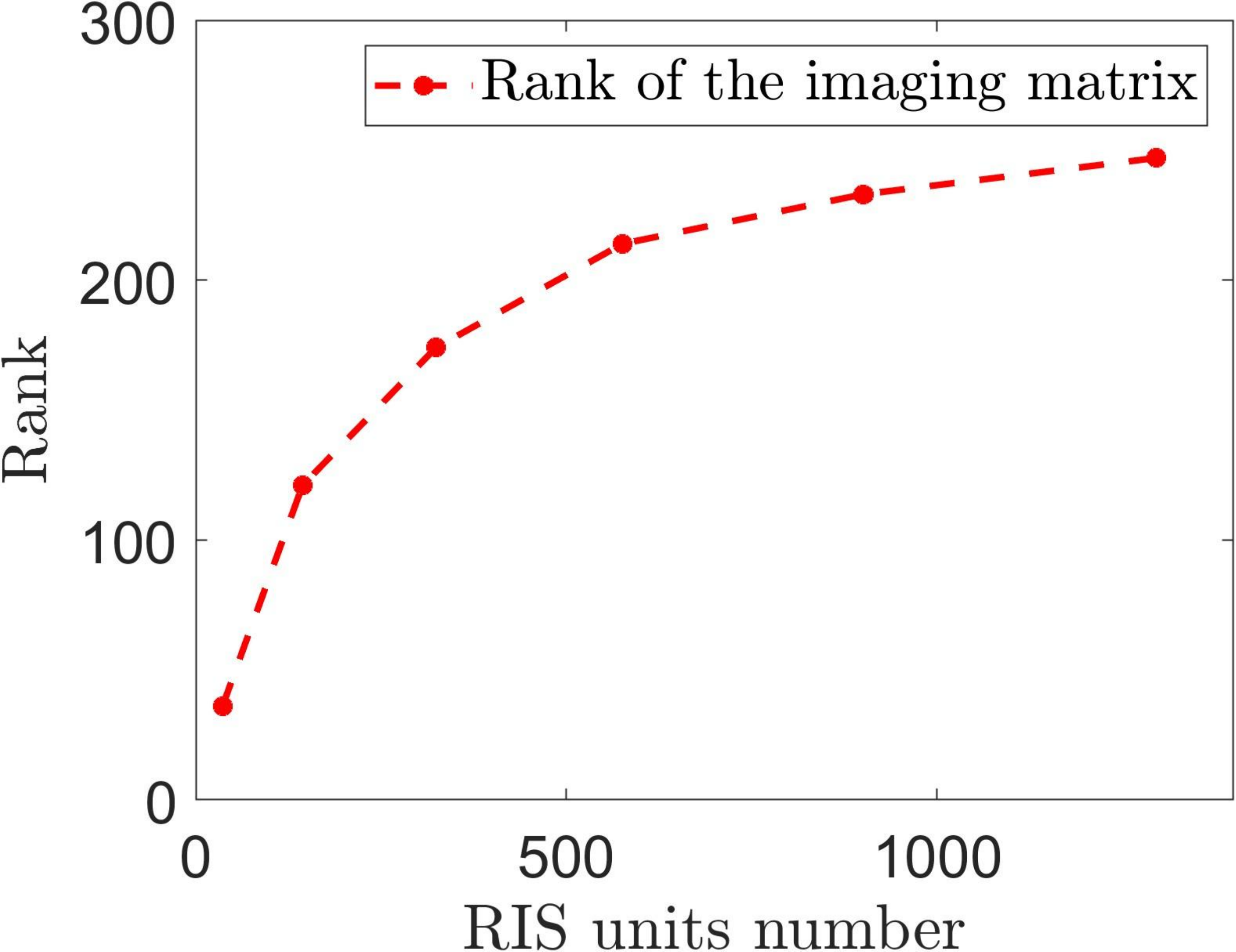}}
\subfloat[SSIM and RMSE with different numbers of the RIS units.]{
\includegraphics[width=.32\columnwidth]{./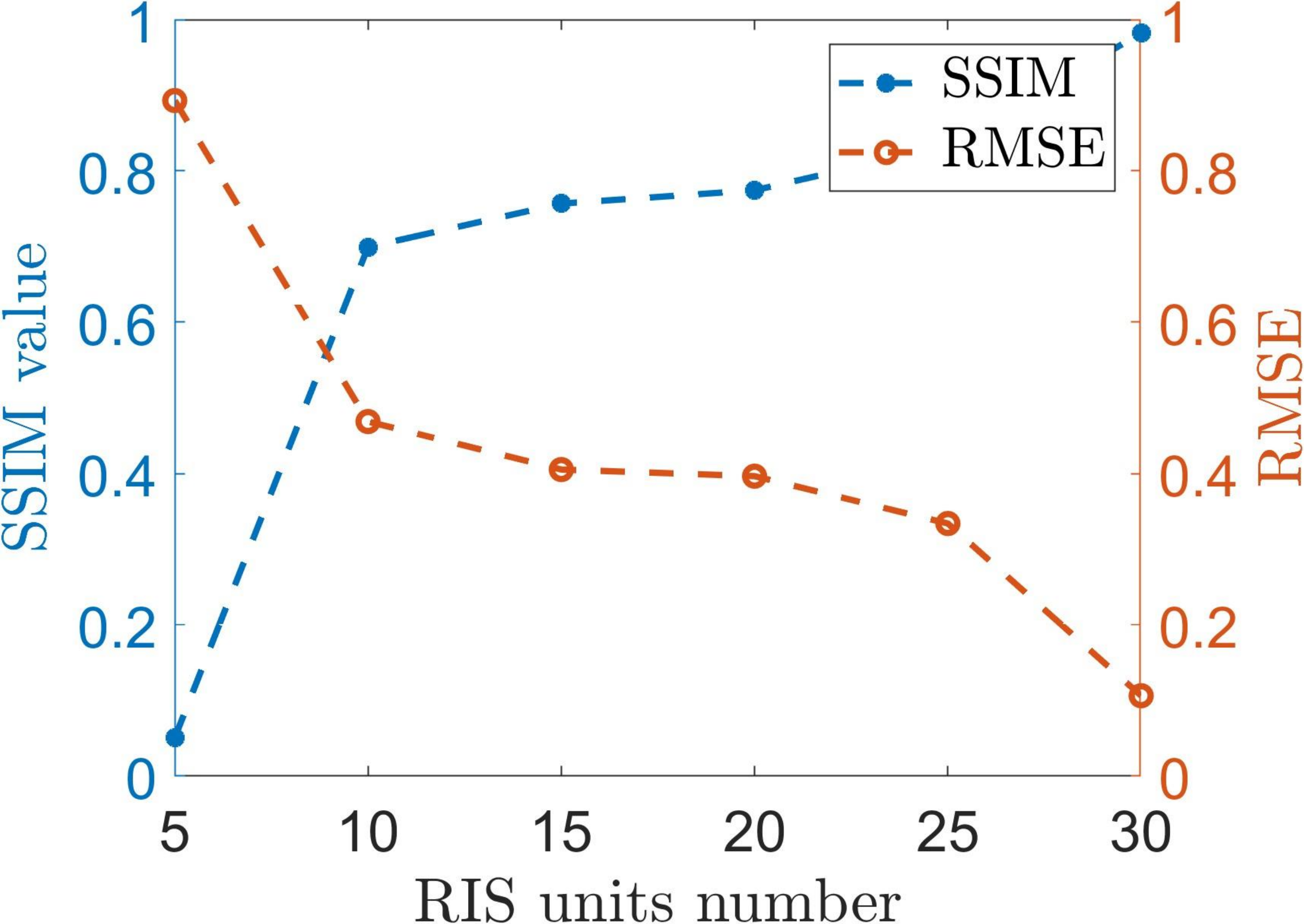}}
\caption{Performances with different numbers of RIS units.}
\label{N_svd_rank}
\end{figure}

\subsection{Impact of the number of RIS units} 

The number of RIS units N corresponds to spatial samplings. Assuming T$>$M, we try to change N and analyze the effect on the performance of the imaging system. Same as the previous scenario, T=300. We only change N from 6$\times$6,12$\times$12,18$\times$18,24$\times$24,30$\times$30 and 36 $\times$ 36, respectively. 

The simulations verify the results of the imaging effect as N varies in Fig.~\ref{ls_matrix}. The variation of singular value distribution of the imaging matrix with increasing N can be seen in Fig.~\ref{N_svd_rank}(a) and the rank of the imaging matrix also increases as is shown in Fig.~\ref{N_svd_rank}(b). It can be seen that with the increase of N, the singular value of the imaging matrix is overall larger, and the imaging image is more clear and accurate, which is conducive to the improvement of system imaging performance. Fig.~\ref{N_svd_rank}(c) shows the variation of SSIM and RMSE versus N.
\begin{figure}[htbp] 
\centerline{\includegraphics[width=.6\columnwidth]{./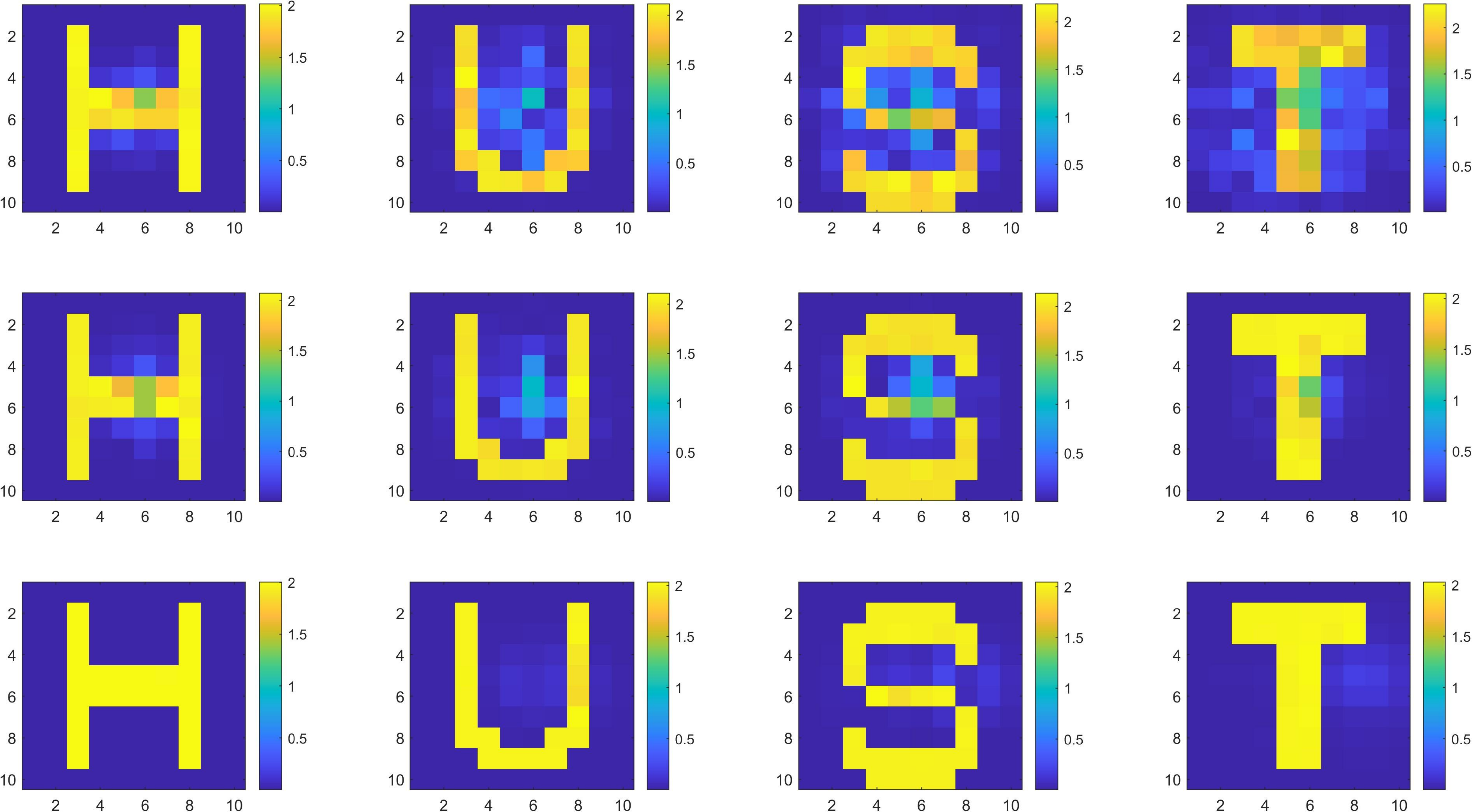}} 
\caption{Imaging results for different deployment scenarios.}
\label{p_ls_matrix}
\end{figure}

\begin{figure}
\centering
\subfloat[Distribution of singular values of the imaging matrix.]{
\includegraphics[width=.3\columnwidth]{./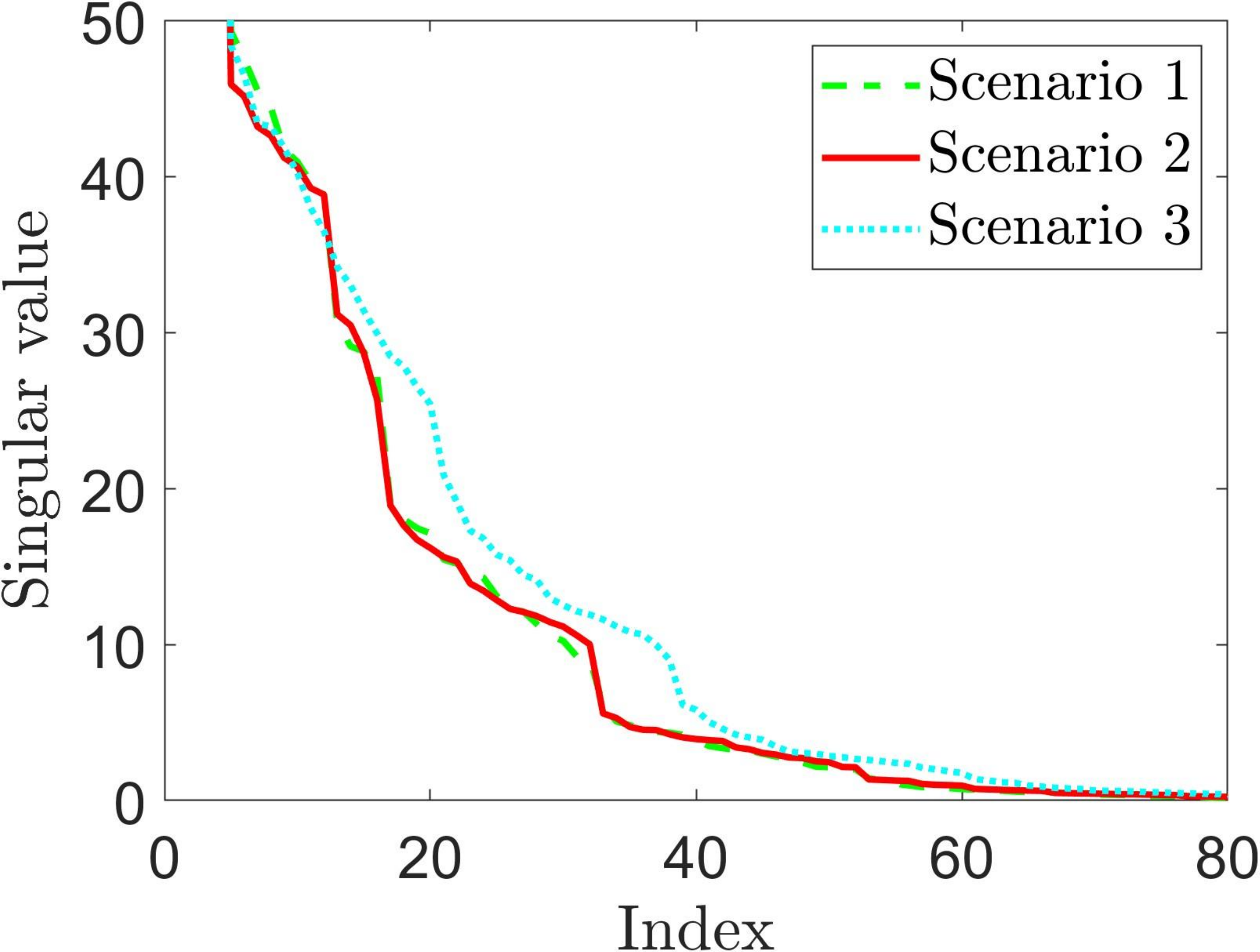}}
\subfloat[The rank of imaging matrix.]{
\includegraphics[width=.3\columnwidth]{./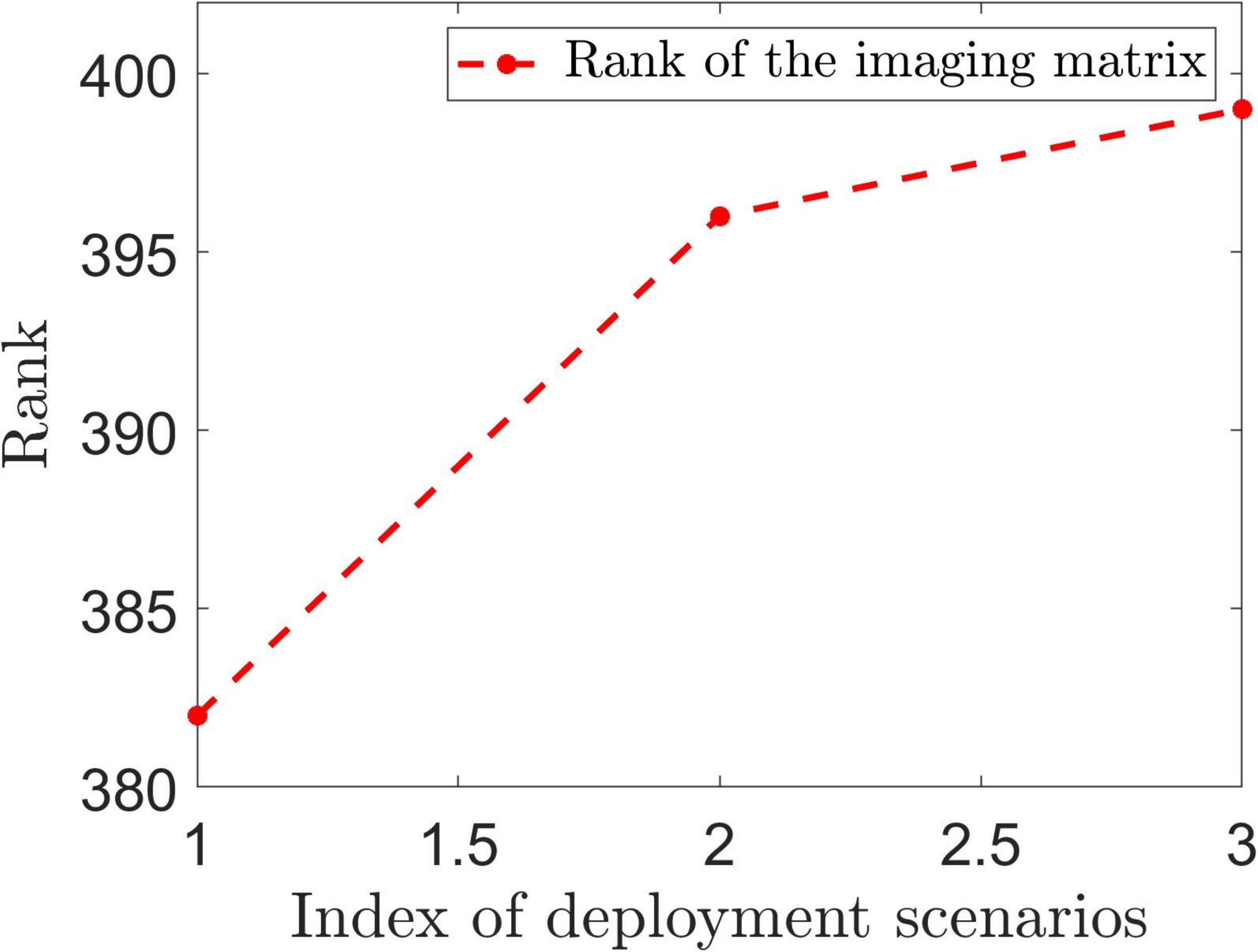}}
\subfloat[SSIM and RMSE with three deployment schemes.]{
\includegraphics[width=.33\columnwidth]{./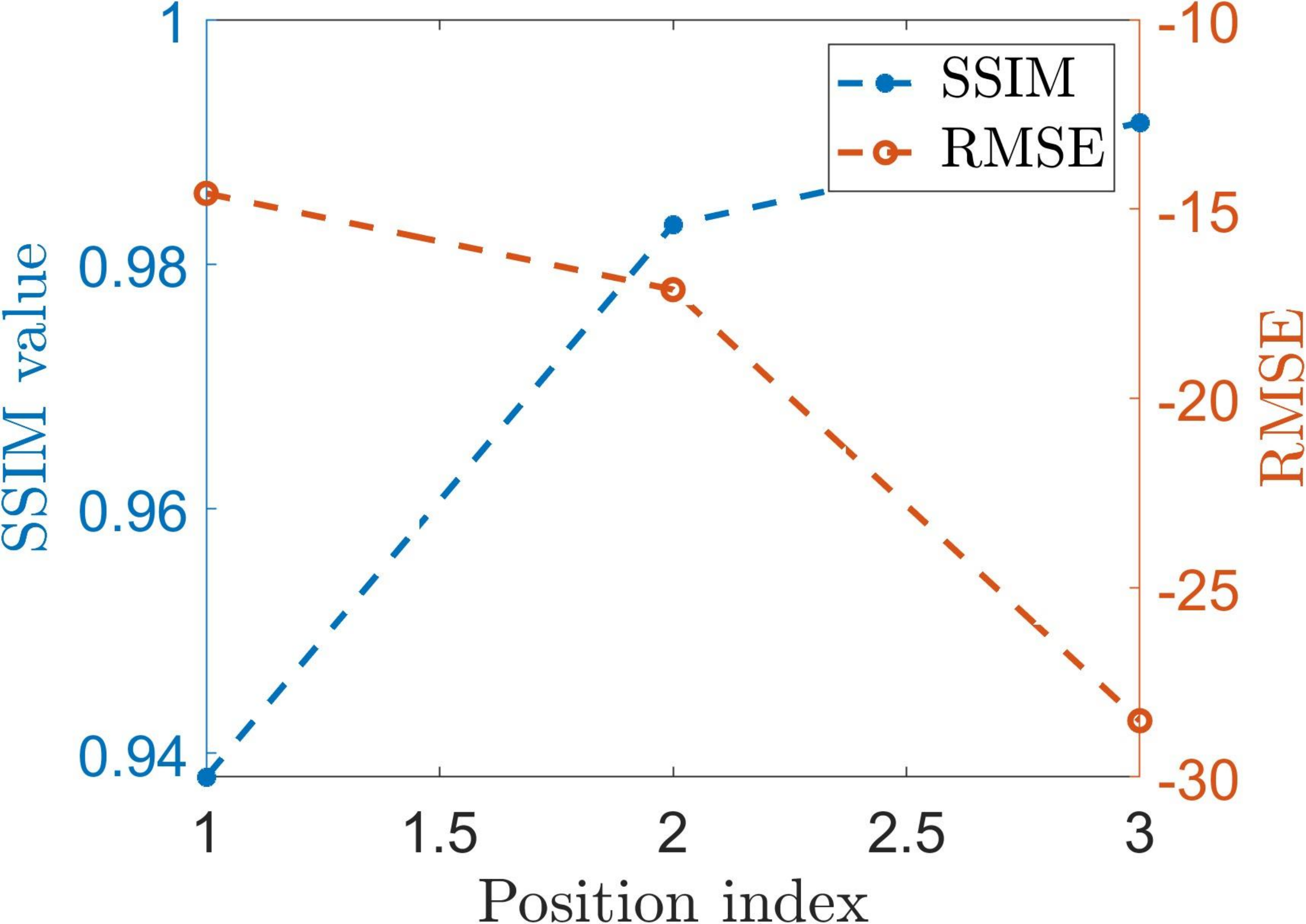}}
\caption{Performances of three scenes.}
\label{P_svd_rank}
\end{figure}

\subsection{Impact of multi-RIS locations}\label{SS:Impact of multi-RIS locations} 
For a single RIS, it isn't easy to distinguish the same incoming angle from the far field. Multi-RIS can improve the observation conditions of overlapping points. 
\begin{figure}
\centering
\subfloat[Scenario I.]{
\includegraphics[width=.32\linewidth]{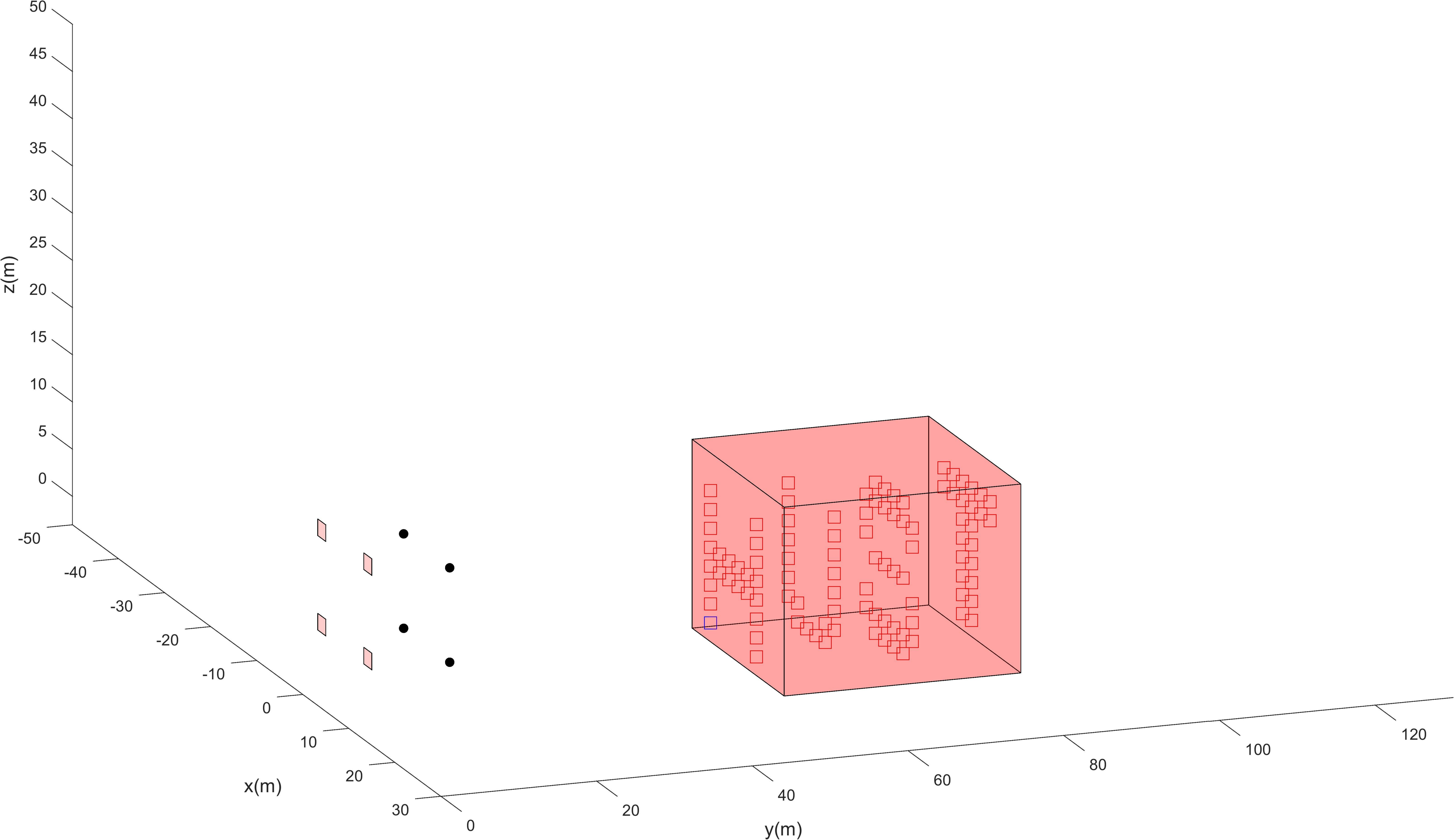}}
\subfloat[Scenario II.]{
\includegraphics[width=.32\linewidth]{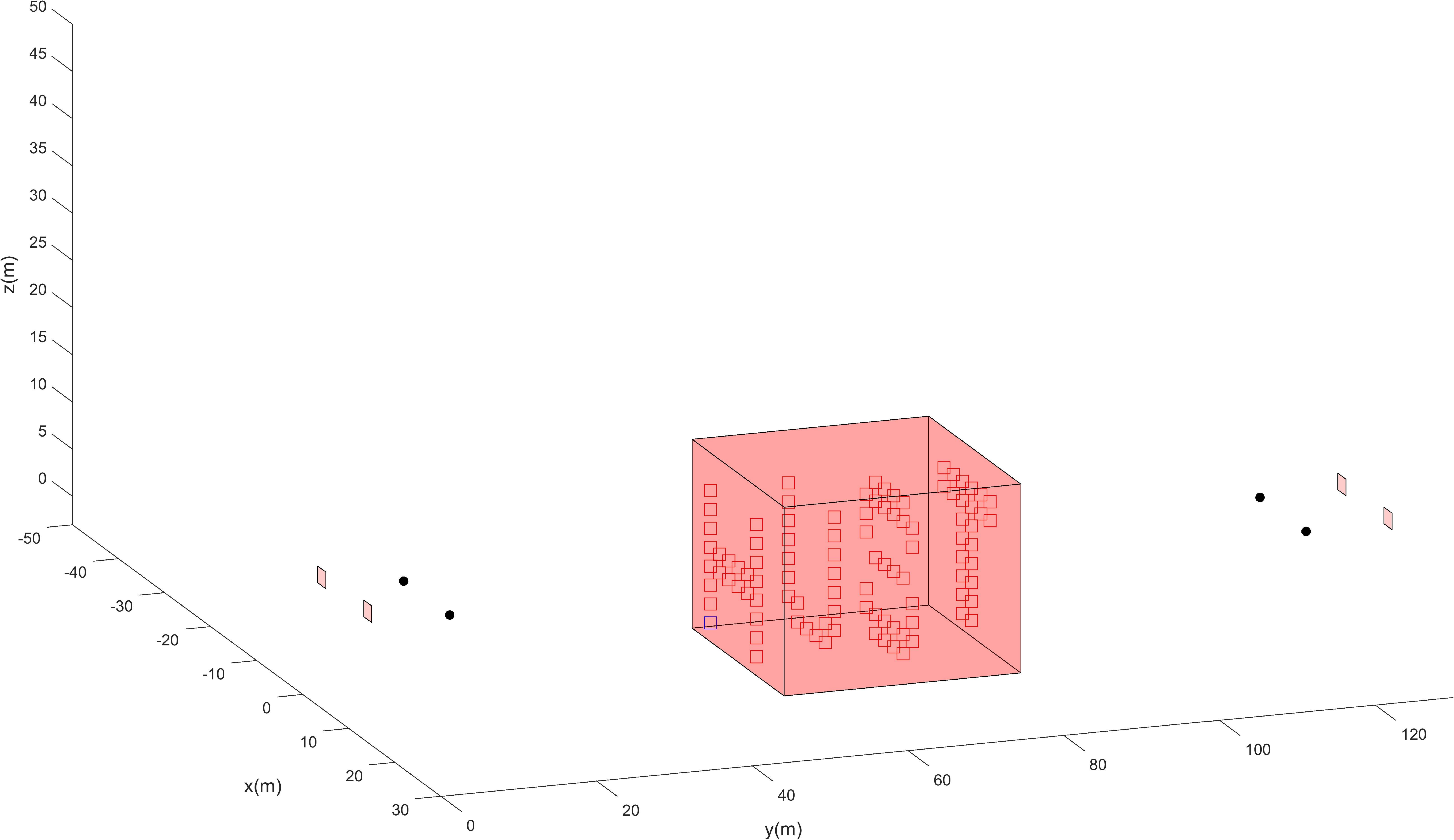}}
\subfloat[Scenario III.]{
\includegraphics[width=.32\linewidth]{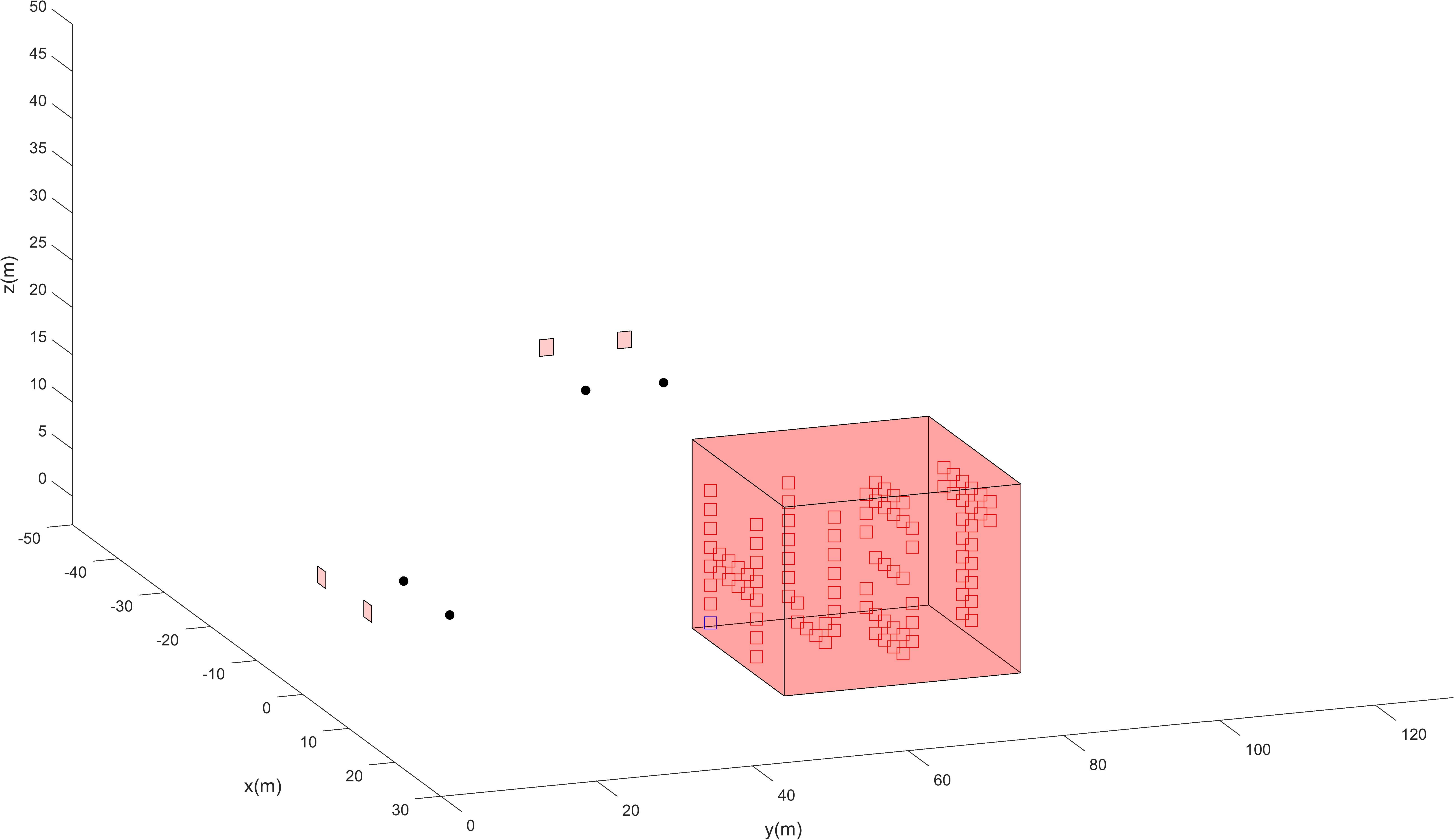}}
\caption{Different deployment scenarios for distributed RIS locations.}
\label{P}
\end{figure}
The SoI is divided into M=10$\times$10$\times$4 grids with dimensions 10m $\times$ 10m $\times$ 30m. The letters "HUST" are set as sources. We set up three schemes to place the four RISs, and N=20 $\times$ 20, which are shown in Figure~\ref{P}. 

The results of the three deployment schemes are shown in Fig~\ref{p_ls_matrix} and Fig~\ref{P_svd_rank}(c). The imaging quality can be improved when the RISs are placed vertically centered. In other words, the RISs are strategically positioned to maximize their coverage across various vertical observation surfaces. The singular values and the rank of the imaging matrix are improved, as shown in Fig~\ref{P_svd_rank}. 

When the incidence angles from different grid points to the RISs are the same, it results in too low a rank of the P-matrix and makes it difficult to recover the original signal. A reasonable deployment scheme for imaging from multiple angles, with as wide an angle of view of the SoI as possible, can solve the above problem. 

\section{Validation of experiment for amplitude-only.} \label{VI}



\subsection{proof-of-concept experiment for amplitude-only imaging}\label{Estimation Experiment}
We design two proof-of-concept experiments. Data from two real-world scenarios are collected by the Universal Software Radio Peripheral. The designed prototype of RIS can be found in\cite{xiong2023ris}. The RISs complete a single measurement of the SoI by changing the phase shift through the FPGA. Computational imaging is executed on the 2D SoI. For simplicity, we set the state of each column of RIS to be the same so that the RIS with 32$\times$10 elements can be regarded as a 32$\times$1 line array. The experiment scenes are shown in Fig.\ref{Experiment scene}.

\begin{figure}
\centering
\subfloat[Imaging in the chamber.]{
\includegraphics[width=.4\columnwidth]{./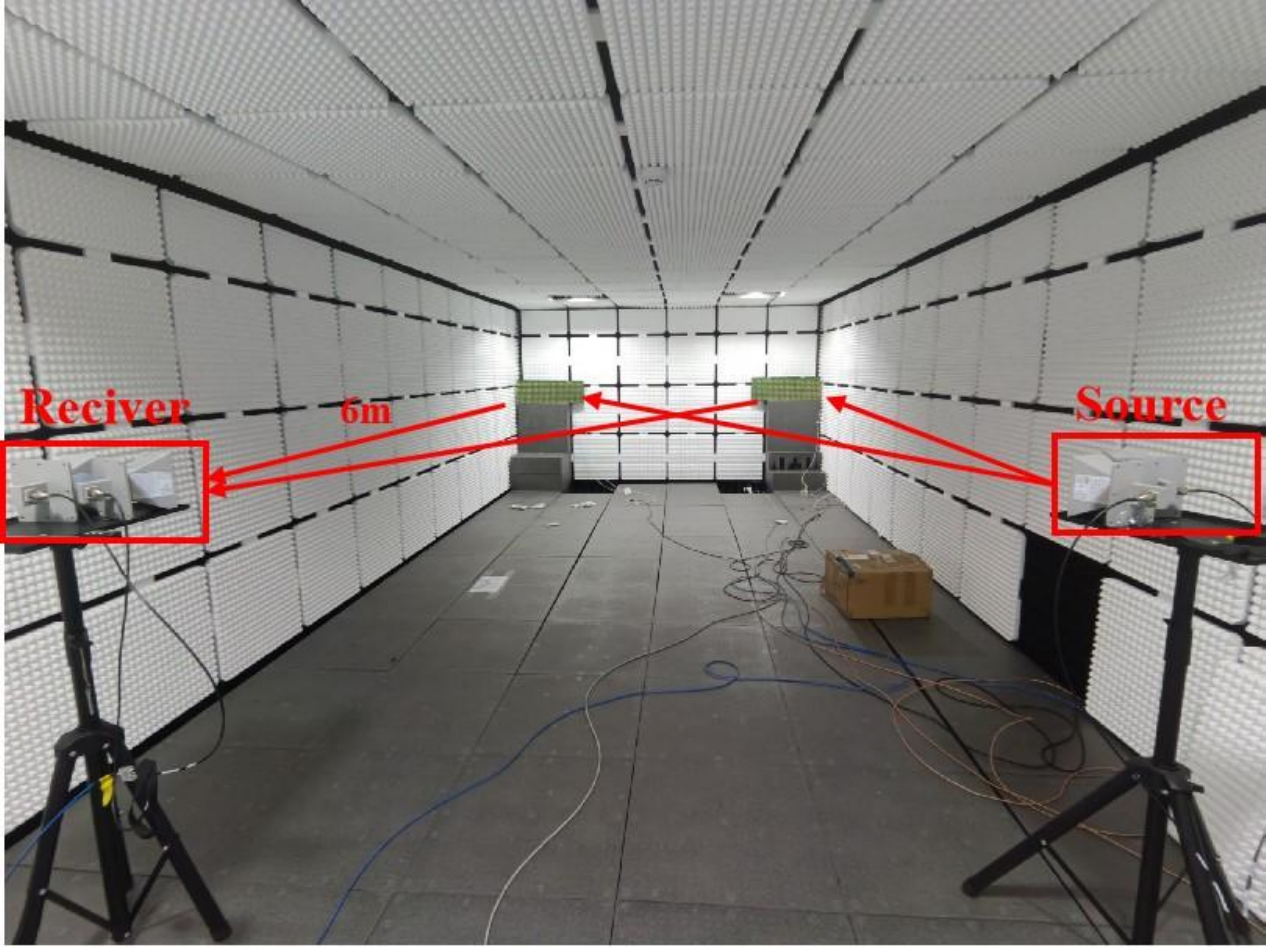}} 
\subfloat[Imaging in the office.]{
\includegraphics[width=.4\columnwidth]{./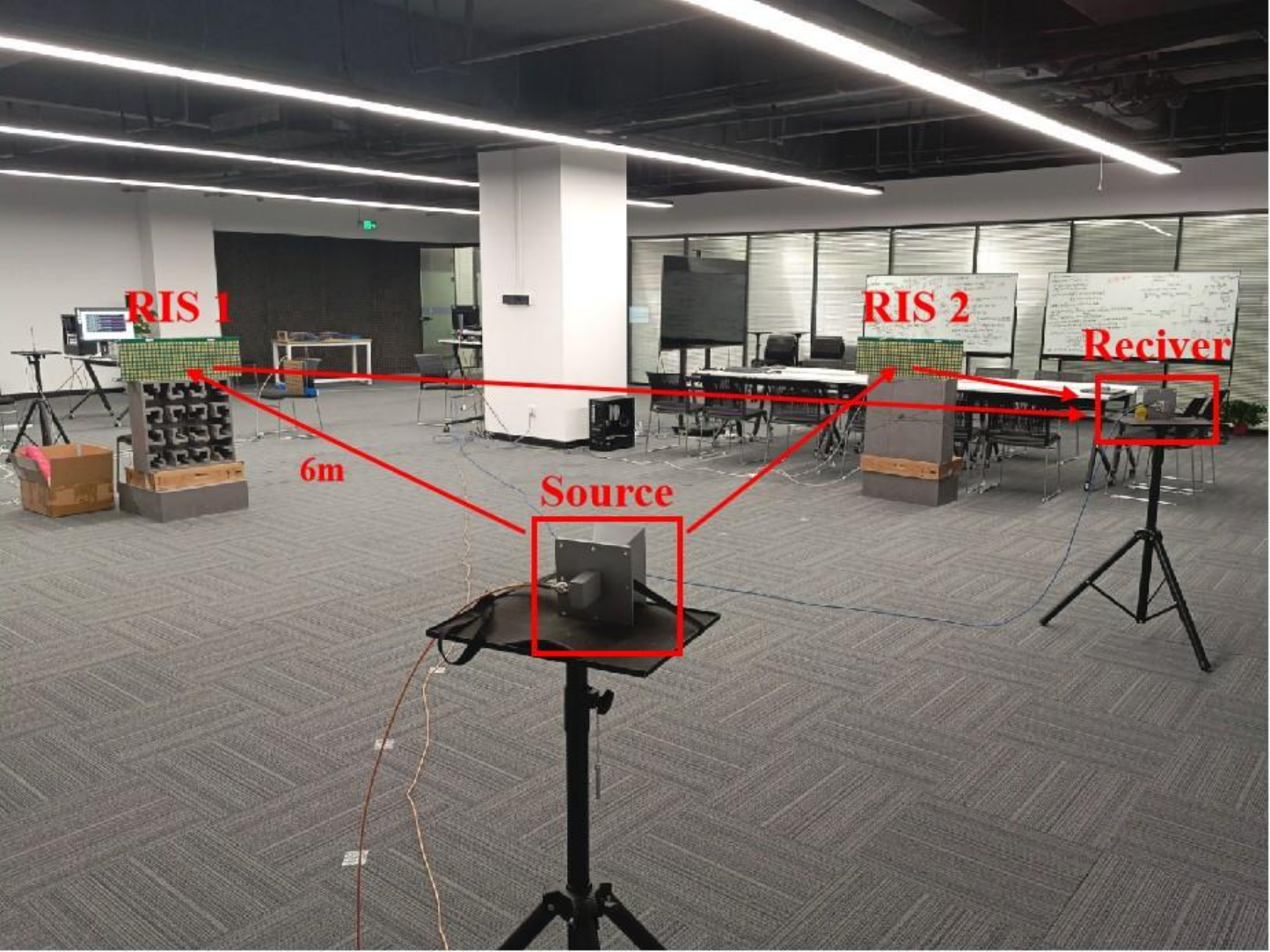}}
\caption{Experiment scenes.}
\label{Experiment scene}
\end{figure}

\begin{figure}
\centering
\subfloat[Images reconstructed in the chamber.]{
\includegraphics[width=.42\columnwidth]{./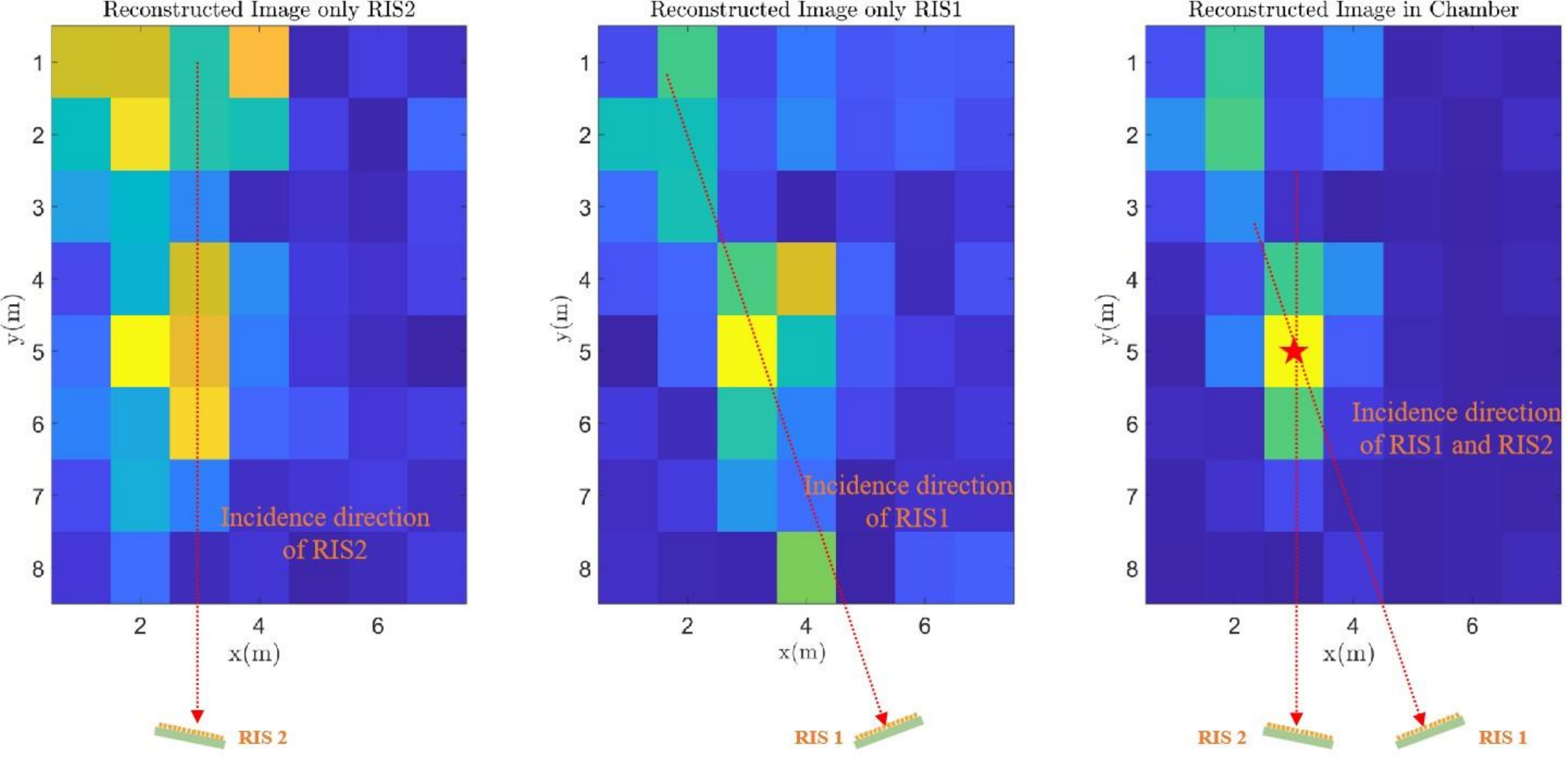}} 
\subfloat[Images reconstructed in the office.]{
\includegraphics[width=.47\columnwidth]{./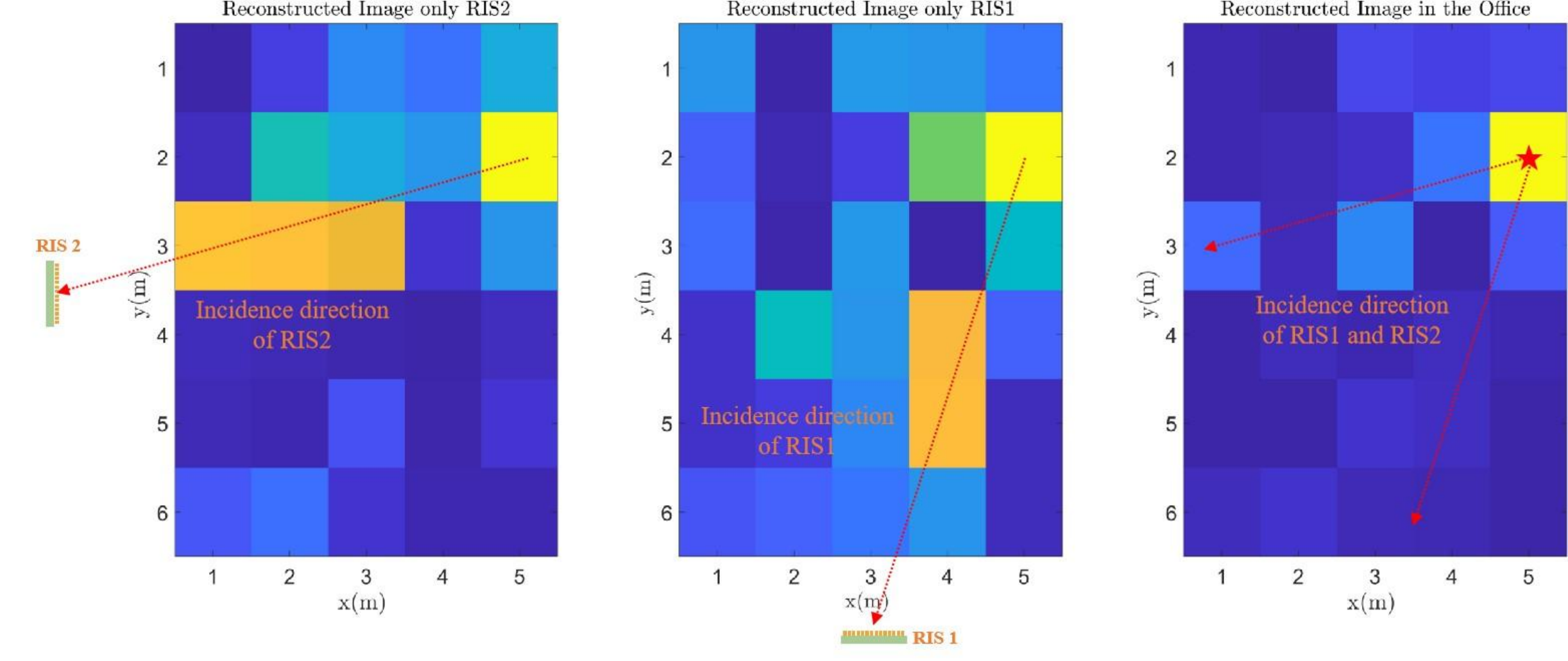}}
\caption{Reconstructed results.}
\label{Reconstructed results}
\end{figure}

In the first scene, the SoI covers a range of 7m$\times$8m spaced at 1m intervals in the microwave chamber. T=500 random samples from RIS1 are enabled to complete the data acquisition of the SoI, and similarly, RIS2 is also activated. The imaging results are shown in Fig\ref{Reconstructed results}(a), where the left and middle sub-images are recovered by RIS2 and RIS1, respectively. The right subplot is recovered using both RISs together. The estimated true source of the signal is marked with the red pentagram. In the second scene, we conduct an experimental validation closer to a real scenario in our office. The SoI covers a range of 5m$\times$6m spaced at 1m intervals. The imaging results are shown in Fig\ref{Reconstructed results}(b). The amplitude-only imaging algorithm can reconstruct an image when the phase of complex-valued signals is unknown. Compared to imaging with a single RIS, the use of multiple RISs can image SoI more clearly, and the imaging accuracy is greatly improved.

\section{Conclusion} 
We explore application scenarios for RIS-aided wireless regional imaging. In this paper, we introduce the multiple RISs linear channel modeling framework. 3D modeling from far-field scenarios is considered. We theoretically analyze the constraints on the rank of the imaging matrix. Simulation modeling shows that a sufficiently large RIS, a relatively reasonable position concerning the SoI, and a distributed RIS system can provide positive gains for imaging. To solve the problem that the phase cannot be acquired, we propose an amplitude-only imaging algorithm. The performance verification of the imaging algorithm is carried out by experiment. It is believed that RIS-aided regional imaging will inject more vitality into the 6G and meta-universe era. 
\appendices
\section{Proof of Theorem~\ref{T:1}}\label{S:Appendix_A}
\begin{proof} \label{proof1}
We denote $\text{rank}\left( {\mathbf{H_{K,T}}} \right)$ as follows:
\small 
\begin{equation} \label{proofeq1}
    \begin{aligned}
    \text{rank}\left( {\mathbf{H_{K,T}}} \right)
    &= \text{rank}\left(\begin{bmatrix}
                {\mathbf{H}}_{1,\mathbf{T}}\\
                \vdots \\
                {\mathbf{H}}_{k,\mathbf{T}}\\
             \end{bmatrix}\right) 
    \le \text{rank}\left( \sum\limits_{k = 1}^K
      {\begin{bmatrix} 
         h^s_k \mathbf{a}^s_k \mathbf{\Omega}_{k,t=1} \mathbf{A}^i_k \mathbf{H}^i_k    \\
                   \vdots                         \\
         h^s_k \mathbf{a}^s_k \mathbf{\Omega}_{k,t=T} \mathbf{A}^i_k \mathbf{H}^i_k    \\
      \end{bmatrix}}  
    \right) \\
    \end{aligned}
\end{equation}
where $\mathbf{a}^s_k = \begin{bmatrix}
        { e^{ j 2 \pi \mathbf{p}_1^{\top} \mathbf{u}  / \lambda } \cdots e^{ j 2 \pi \mathbf{p}_N^{\top} \mathbf{u}  / \lambda }}
        \end{bmatrix}
        $ and $\mathbf{\Omega}_{k,t} = \text{diag}(\gamma_{k,1} e^{j \Omega_{k,1}},\dots,\gamma_{k,N} e^{j \Omega_{k,N}} ).$
For simplicity of expression, we set $ e^{ j 2 \pi \mathbf{p}_n^{\top} \mathbf{u}  / \lambda } = a_n$ and ${\gamma_{k,n} e^{j \Omega_{k,n}}}=\omega_{k,n}$. So, the formula \ref{proofeq1} can be denoted as: 
\begin{equation} \label{proofeq2}
    \begin{aligned}
    & \text{rank}\left( \sum\limits_{k = 1}^K
        \begin{bmatrix} 
        h^s_k
        \begin{bmatrix}
        { a_1  \cdots a_N }
        \end{bmatrix}
         {\begin{bmatrix}
          \omega_{k,1}&{}&{}\\
          {}& \ddots &{}\\
          {}&{}&\omega_{k,N}
        \end{bmatrix}}_{t=1}
        \mathbf{A}^i_k \mathbf{H}^i_k \\
        {\vdots} \\
        h^s_k
        \begin{bmatrix}
         a_1  \cdots a_N
        \end{bmatrix}  
         \begin{bmatrix}
          \omega_{k,1}&{}&{}\\
          {}& \ddots &{}\\
          {}&{}&\omega_{k,N}
        \end{bmatrix}_{t=T}
        \mathbf{A}^i_k \mathbf{H}^i_k \\
      \end{bmatrix} \right) \\
    &= \text{rank}\left( \sum\limits_{k = 1}^K
      \begin{bmatrix}
         h^s_k
        \begin{bmatrix}
        \omega_{k,1} \cdots \omega_{k,N} 
        \end{bmatrix}_{t = 1}
        \begin{bmatrix}
          a_1  &{}&{} \\
          {}& \ddots &{}\\
          {}&{}& a_N
        \end{bmatrix}
        \mathbf{A}^i_k \mathbf{H}^i_k  \\
       \vdots \\
        h^s_k
        \begin{bmatrix}
        \omega_{k,1} \cdots \omega_{k,N} 
        \end{bmatrix}_{t = T}
        \begin{bmatrix}
          a_1  &{}&{} \\
          {}& \ddots &{}\\
          {}&{}& a_N
          
        \end{bmatrix}
        \mathbf{A}^i_k \mathbf{H}^i_k\\
       \end{bmatrix}
        \right) \\ 
        \end{aligned}
\end{equation}

\noindent
Here, we now set 
\begin{equation}
    \begin{aligned}
      h^s_k
        \begin{bmatrix}
          e^{ j 2 \pi \mathbf{p}_1^{\top} \mathbf{u}  / \lambda } &  & \\
           & \ddots & \\
           &  & e^{ j 2 \pi \mathbf{p}_N^{\top} \mathbf{u}  / \lambda }
        \end{bmatrix}
        \mathbf{A}^i_k \mathbf{H}^i_k
       = {{\bf{P}}_{{\bf{k}}}}
    = \begin{bmatrix}
      {{\bf{p}}_1} \\
       \vdots \\
      {{\bf{p}}_N}
      \end{bmatrix} .\\
       \end{aligned}
\end{equation}

\noindent
Then, the formula \ref{proofeq2} can be rewritten as follows, 
\begin{equation}
    \begin{aligned}
     & \text{rank}\left( 
       \sum\limits_{k = 1}^K 
       \begin{bmatrix}
      \begin{bmatrix}
      \omega_{k,1}& \cdots &\omega_{k,N}
      \end{bmatrix}_{t = 1}{\bf{P}_k}\\
       \vdots \\
      \begin{bmatrix}
      \omega_{k,1}& \cdots &\omega_{k,N}
      \end{bmatrix}_{t = T}{\bf{P}_k}
      \end{bmatrix}  \right)\\
      &= \text{rank}\left( {\sum\limits_{k = 1}^K \begin{bmatrix}
      {{{\begin{bmatrix}
      {\omega_{k,1}^{t = 1}}& \cdots &{\omega_{k,N}^{t = 1}}\\
       \vdots & \ddots & \vdots \\
      {\omega_{k,1}^{t = T}}& \cdots &{\omega_{k,N}^{t = T}}
      \end{bmatrix}}_{T \cdot N}}
      \begin{bmatrix}
      {{\bf{p}}_1} \\
       \vdots \\
      {{\bf{p}}_N}
      \end{bmatrix}^k}_{N \cdot M} \end{bmatrix} } \right) \\
      &= \text{rank}\left( {\sum\limits_{k = 1}^K 
        {\bf{W}}^k
        {\bf{P}}^k
        }\right)
      \le K \text{rank} \left( 
        {\bf{W}}^k
        {\bf{P}}^k \right)
      \le K \min \left\{ {T,N,M} \right\}. 
    \end{aligned}
\end{equation}
Therefore, the rank of imaging matrix $\mathbf{H_{K,T}}$ with K receivers is bouned by the total number of samples K$\times$T, the total number of RIS units K$\times$N, and the number of unknown grid points M.
\end{proof}
\bibliographystyle{IEEEtran}
\bibliography{Reference}

\end{document}